\newcommand{\R}{\mathcal{R}}
\newcommand{\Partes}{\mathcal{P}}
\newcommand{\M}{\mathcal{M} }
\newcommand{\D}{\mathcal{D} }
\newcommand{\quo}[2]{{#1}/{#2}}
\newcommand{\<}{\sqsubseteq}
\newcommand{\disc}{\mathcal{D}}
\newcommand{\lra}{\longrightarrow}
\newcommand{\mult}{\mathcal{M}}
\newcommand{\N}{\textup{I}\!\textup{N}}
\newcommand{\ps}{\mathcal{P}}
\newcommand{\Rel}{\textup{\textbf{Rel}}}
\newcommand{\rel}[1]{\mathrm{Rel}(#1)}
\newcommand{\Sets}{\textup{\textbf{Sets}}}
\begin{document}

\title{Multiset bisimulations as a common framework for ordinary and 
probabilistic bisimulations\thanks{Research supported by the Spanish
projects DESAFIOS TIN2006-15660-C02-01 and PROMESAS S-0505/TIC/0407}}

\author{David de Frutos Escrig 
\and Miguel Palomino \and Ignacio F\'abregas}
\institute{Departamento de Sistemas Inform\'aticos y Computaci\'on\\ Universidad
Complutense de Madrid}

\maketitle

\begin{abstract}
Our concrete objective is to present both ordinary bisimulations
and probabilistic bisimulations in a common coalgebraic framework
based on multiset bisimulations. For that we show how to relate
the underlying powerset and probabilistic distributions functors with
the multiset functor by means of adequate natural transformations.
This leads us to the general topic that we investigate in the paper:
a natural transformation from a functor $F$ to another $G$ transforms
$F$-bisimulations into $G$-bisimulations but, in general, it is not possible
to express $G$-bisimulations in terms of $F$-bisimulations. However,
they can be characterized by considering Hughes and Jacobs' notion
of simulation, taking as the order on the functor $F$ the equivalence
induced by the epi-mono decomposition of the natural transformation
relating $F$ and $G$. We also consider the case of alternating probabilistic
systems where non-deterministic and probabilistic choices are mixed, although
only in a partial way, and extend all these results to categorical
simulations.
\end{abstract}

\section{Introduction}
Bisimulations are the adequate way to capture behavioural indistinguishability 
of states of systems. 
Ordinary bisimulations were introduced \cite{Park81} to cope with 
labelled transition systems and other similar models and have been used to 
define the formal observational semantics of many popular languages and 
formalisms, such as CCS. Bisimilarity is also the natural way to express
equivalence of states in any system described by means of a coalgebra over
an arbitrary functor $F$.
The general categorical definition can be presented in a more concrete
way for the class of polynomial functors, that are defined by means of a 
simple signature of constructors and whose properties, including
the definition of relation lifting, can be studied by means of
structural induction. 
In particular, the powerset constructor is one of them,
and therefore the class of labelled transition systems can be
studied as a simple and illustrative example of the categorical
framework.

The simplicity and richness of the theory of bisimulations made it
interesting to define several extensions in which the
structure on the set of labels of the considered systems was
taken into account, instead of the plain approach made by 
simple (strong) bisimulations. For instance, weak bisimulation takes into
account the existence of non-observable actions, while timed and
probabilistic bisimulation introduce timed or probabilistic
features. In particular, the original definition of probabilistic
bisimulation for probabilistic transition systems had to capture
the fact that we should be able to accumulate the probabilities of
several transitions arriving to equivalent (bisimilar) states in
order to simulate some transition, or conversely that we should be able 
to distribute the probability of a transition among several others 
connecting the same states.

The classical definition by Larsen and Skou \cite{LarsenSkou91} certainly 
generalizes the definition of ordinary bisimulation in a nice way, although at
the cost of leaving out the categorical scenario discussed above.
However, Vink and Rutten proved in \cite{VinkRutten99} that we can
reformulate that definition in a coalgebraic way. For that, they needed
to consider a functor $\D$ defining probabilistic distributions,
that appears as the primitive construction in the definition of the 
corresponding probabilistic systems. Even if this is quite an elegant
characterization, it forces us to leave the realm of (probabilistic)
transition systems, getting into the more abstract one of
probabilistic distributions. 


We would like to directly manage probabilistic transition
systems in order to compare the results about ordinary transition systems and 
those on probabilistic systems as much as possible. 
We have found that multi-transition systems, where we
can have several identical transitions and the number of times any of them 
appears matters, constitute the adequate framework to
establish the relation between those two kinds of transition
systems. As a matter of fact, we will see that the use of multisets
instead of just plain sets leads us to a natural presentation of
relation lifting for that construction; besides, we can add the
corresponding functor to the collection defining polynomial
functors, thus obtaining an enlarged class with similar nice properties
to those in the original class.

Although a general theory combining non-deterministic and probabilistic choices
seems quite hard to develop, since it is difficult to combine both
functors in an smooth way \cite{VarWins06}, we will present at
least the
case of \emph{alternating}\footnote{Although we call alternating to uor
systems, we
do not need an strict alternation between non-deterministic and probabilistic
states as that appearing in \cite{RGA95}, but only that these two kind of
choices would not appear mixed after the same state.} probabilistic systems.
In those systems, 
the classical definitions of ordinary and probabilistic
bisimulation can
be combined to obtain the natural definition of alternating probabilistic
bisimulation, that perfectly fits into our framework based on categorical
simulations on our multi-transition systems.

The functors defining ordinary transition systems and
probabilistic systems can be obtained by applying an adequate
{\em natural transformation} to a functor defining multiset
transition systems. In both cases bisimulations
are preserved in both directions when applying those
transformations. This leads us to the general theory that we investigate in this
paper: as is well-known, any natural transformation 
between two functors $F$ and $G$ transforms $F$-bisimulations into
$G$-bisimulations; in addition, and more interesting, 
whenever the natural transformation relating $F$ and $G$ is an epi,
we can reflect $G$-bisimulations and express them at the level of
the functor $F$, though this cannot be done in general
just by means of $F$-bisimulations. However, they can be
characterized by using Hughes and Jacobs' notion of simulation 
\cite{HughesJacobs04}, when we consider as the order on the functor $F$ 
the equivalence induced by the epi-mono decomposition of the natural 
transformation relating $F$ and $G$.
Once categorical simulations have come into play, it is nice to find that 
we can extend all our results to include  simulations based on any 
order. 
These extensions could be considered the main results in the paper, since all
our previous results on bisimulations could be presented as particular cases
of the former, using the fact that bisimulations are a particular case of
categorical simulations.

Although in a different direction, that of exploring the relation between
non-deterministic and probabilistic choices, instead of the different notions
of distributed bisimulstions, in this paper we continue the work initiated in
FORTE 2007 \cite{dFRG07}, exploring the ways the general theory on
categorical bisimulations and categorical simulations can be applied to obtain
nearly for free interesting results on applied concrete cases, that without the
support of that general theory would need different non-trivial proofs.
Therefore, our work has a mixed flavour: in one hand we develop new abstracts
results that extend the general theory, thus proving the concrete general
results we need to apply that general theory; in the other hand we apply these
results to simple but important concrete concepts, that therefore are proved to
be particular cases of the rich general theory. These are only concrete
examples that we hope to extend and generalize in the near future.

\section{Basic definitions}

We review in this section standard material on coalgebras and bisimulations, 
as can be found for example in \cite{JacobsRutten97,Rutten00,Jacobs07}. 
Besides, we introduce basic notations on multisets and the corresponding 
functor $\mult$, together with some others 
for the functor $\disc$ defining discrete probabilistic distributions.

An arbitrary endofunctor $F:\Sets \lra \Sets$ can be lifted to a functor
in the category $\Rel$ of relations $\rel{F} : \Rel \lra \Rel$.
In set-theoretic terms, for a relation $R\subset X_1 \times X_2$,
\[
\rel{F}(R) = \{ \langle u,v \rangle \in FX_1 \times FX_2 \mid 
                \exists w\in F(R).\, F(r_1)(w) = u, F(r_2)(w) = v \}
\]

It is well-known that for polynomial functors $F$, $\rel{F}$ can be 
equivalently defined by induction on the structure of $F$. 
Since we will be making extensive use of the powerset functor, we next present 
how the definition particularizes to it:
\begin{eqnarray*}
\rel{\ps G}(R) & = \{ (U,V) \mid 
   & \forall u\in U.\, \exists v\in V.\, \rel{G}(R)(u,v) \land \\
 & & \forall v\in V.\, \exists u\in U.\, \rel{G}(R)(u,v) \}
\end{eqnarray*}

Multisets will be represented as sets by considering their characteristic 
function $\chi_M : X \lra \N$; similarly, discrete probabilistic distributions
are represented by discrete measures $p_D : X \lra [0,1]$, with
$\sum_{x \in X} p_D(x) = 1$.

We will use along the paper several different ways to enumerate
the ``elements'' of a multiset. 
We define the support of a multiset $M$ as the set of elements that appear
in $M$: $\{M\}_X=\{x\in X\mid \chi_M(x)>0\}$. 
We are only interested in multisets having a finite support,
so that in the following we will assume that every multiset is finite. 
Given a finite subset $Y$ of $X$ and an
enumeration of its elements $\{y_1,\ldots y_m\}$, for each tuple
of natural weights $\langle n_1,\ldots, n_m \rangle$ we will
denote by $\sum_{y_i\in Y}n_i\cdot y_i$ the multiset $M$ given by
$\chi_M(y_i)=n_i$ and $\chi_M(y)=0$ for $y\notin Y$. 
By abuse of notation we will sometimes consider sets as a particular case 
of multisets, by taking for each finite set $Y=\{y_1,\ldots y_n\}$
the canonical associated multiset $\sum_{y_i\in Y} 1\cdot y_i$.
Finally, we also enumerate the elements of a multiset by
means of a generating function: given a finite set $I$ and
$x : I\lra X$, we denote by $\{x_i\mid i\in I\}$ the multiset
$M_I$ given by $\chi_{M_I}(y)=|\{i\in I\mid x_i=y\}|$. Note that
in this case sets are just the multisets generated by an
injective generating function.

We will denote by $\mult(X)$ the set of multisets on $X$, while $\disc(X)$
represents the set of probabilistic distributions on $X$. 
Both constructions can be naturally extended to functions, thus getting the
desired functors: for $f: X \lra Y$ we define $\mult(f): \mult(X) \lra\mult(Y)$
by $\mult(f)(\chi)(y) = \sum_{f(x) = y} \chi(x)$, and 
$\disc(f) : \disc(X) \lra \disc(Y)$ by $\disc(f)(p)(y) = \sum_{f(x) = y} p(x)$.

Although the multiset and the probabilistic distributions functors are not 
polynomial, this class can be enlarged by incorporating them 
since their liftings can be defined with the following equations:
\[
\begin{array}{rcl}
\rel{\mult G}(R) &=
&\{(M,N) \mid \exists f: I\lra GX, g : I\lra GY, \textrm{generating functions
of} \\
&& \phantom{\{(M,N) \mid}
\textrm{$M$ and $N$ s.t. $\forall i\in I.\,(f(i),g(i))\in\rel{G}(R)$} \}\,;
\end{array}
\]
\[
\begin{array}{l}
\rel{\disc G}(R) = 
\{ (d^x,d^y)\in \disc(G(X))\times \D(G(Y)) \mid 
   \forall U\subseteq G(X).\,\forall V\subseteq G(Y).\, \\
\phantom{\rel{\disc(G)}(R) = \{}
\Pi_1^{-1}(U) = \Pi_2^{-1}(V) \Rightarrow \sum_{x\in U}d^x(x)=\sum_{y\in V}d^y(y) \}\,,
\end{array}
\]
where $\Pi_1$ and $\Pi_2$ are the projections of $\rel{G}(R)$ into $GX$ and 
$GY$, respectively. 

$F$-coalgebras are just functions $\alpha : X \lra FX$. 
For instance, plain labelled transition systems arise as coalgebras for the
functor $\ps(A\times X)$.
We will also consider multitransition systems, which correspond to the
functor $\mult(A\times X)$, and probabilistic transition systems, corresponding
to $\mult_1([0,1]\times A\times X)$, where we only allow multisets in which
the sum of its elements
is 1\footnote{As a matter of fact, we could simply say that we consider a
particular case
of the functor $\M$: that in which the set of actions $A$ is the set of pairs,
whose first component is in $[0,1]$; besides, we would only consider 
multisets in which $\sum n_i p_i=1$. }
: $\sum n_i\cdot (p_i,a_i,x_i)\in \mult_1([0,1]\times
A\times X)$ iff 
$\sum n_i p_i=1$.

Then, the lifting of the functor $\mult_1([0,1]\times \cdot)$ is 
defined as a particular case of that of $\mult$ by:
\[
\begin{array}{l}
\rel{\mult_1([0,1]\times \cdot)G}(R) = \\
\qquad
\{(M,N)\in \mult_1([0,1]\times GX)\times \mult_1([0,1]\times GY)\mid \\
\qquad\phantom{\{}\exists f:I\lra [0,1]\times GX, g: I\lra [0,1]\times GY,
\textrm{generating functions of $M$}\\
\qquad\phantom{\{}\textrm{and $N$ s.t.}\
 \forall i\in I.\,\Pi_1(f(i))=\Pi_1(g(i)) \land 
                              (\Pi_2(f(i)),\Pi_2(g(i)))\in \rel{G}(R) \}\,.
\end{array}
\]

A bisimulation for coalgebras $c : X\lra FX$ and $d:Y \lra FY$ is a relation
$R\subseteq X\times Y$ which is ``closed under $c$ and $d$'':
\[
\textrm{if $(x,y) \in R$ then $(c(x), d(y)) \in \rel{F}(R)$}\,.
\]
We shall use the term $F$-bisimulation sometimes to emphasize the functor
we are working with.

Bisimulations can also be characterized by means of spans, using the general
categorical definition by Aczel and Mendler~\cite{AczelMendler89}:
\[
\xymatrix@R=5.0ex{
 {X}\ar[d]_{c}   & {R} \ar[d]_{e}\ar[l]_{r_1}\ar[r]^{r_2}    & Y\ar[d]_{d} \\
 {FX}           & {FR} \ar[l]_{Fr_1}\ar[r]^{Fr_2}           & {FY}
}
\]
$R$ is a bisimulation iff it is the carrier of some coalgebra $e$ making
the above diagram commute, where the $r_i$ are the projections of $R$ into
$X$ and $Y$.

We will also need the general concept of simulation introduced by Hughes and
Jacobs~\cite{HughesJacobs04} using order on functors.
Let $F : \Sets\lra\Sets$ be a functor. 
An order on $F$ is defined by means of a functorial collection of preorders
$\sqsubseteq_X \subseteq FX\times FX$ that must be preserved by renaming:
for every $f: X\lra Y$, if $u\sqsubseteq_X u'$ then $Ff(u) \sqsubseteq_Y Ff(u')$.

Given an order $\sqsubseteq$ on $F$, a $\sqsubseteq$-simulation for
coalgebras $c: X\lra FX$ and $d: Y\lra FY$ is a relation $R\subseteq X\times Y$
such that
\[
\textrm{if $(x,y) \in R$ then $(c(x), d(y)) \in \rel{F}_\sqsubseteq(R)$}\,,
\]
where 
$\rel{F}_{\sqsubseteq}(R)$ is 
$\sqsubseteq\circ\rel{F}(R)\circ\sqsubseteq$, 
which can be expanded to
\[
\rel{F}_{\sqsubseteq}(R)=\{(u,v)\mid\exists w\in F(\R).\; u\sqsubseteq Fr_1(w)\wedge
Fr_2(w)\sqsubseteq v\} \,.
\]

One of the cases covered inmediately by this general notion of coalgebraic
simulation is that of ordinary simulations.
However, equivalence (functorial) relations, represented by $\equiv$, are 
a particular class of orders on $F$, thus generating the corresponding 
class of $\equiv$-simulations. 
As is the case for ordinary bisimulations, $\equiv$-simulations themselves 
need not be equivalence relations, but once we imposed to the equivalence
$\equiv$ the technical condition of being stable \cite{HughesJacobs04} then the
induced notion of $\equiv$-similarity remains an equivalence itself.

\begin{proposition}
For any stable functorial equivalence relation $\equiv_{X}\subseteq
FX\times FX$, the induced notion of $\equiv_a$-similarity
relating elements of $X$ for a coalgebra $a:X\lra FX$ is an equivalence 
relation. 
In particular, for the plain equality relation $=_{X}\subseteq
FX\times FX$, $=_X$-similarity coincides with plain
$F$-bisimulation.
\end{proposition}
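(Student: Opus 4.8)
The plan is to prove the two parts of the proposition separately.

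For the first part, I need to show that $\equiv_a$-similarity is reflexive, symmetric, and transitive. Reflexivity should be immediate: since each $\equiv_X$ is a preorder (hence reflexive) and the identity relation is trivially an $\equiv$-simulation when we use reflexivity of $\equiv$ to bridge $c(x)$ to itself. Symmetry will require the symmetry of the equivalence $\equiv$ together with the fact that the converse of an $\equiv$-simulation is again an $\equiv$-simulation; concretely, if $R$ is an $\equiv_a$-simulation then by unfolding $\rel{F}_{\equiv}(R) = {\equiv}\circ\rel{F}(R)\circ{\equiv}$ and using that $\rel{F}(R)^{-1} = \rel{F}(R^{-1})$ (a standard property of relation lifting) together with symmetry of $\equiv$, one checks that $R^{-1}$ is again a simulation. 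The crux is transitivity, which is exactly where the \emph{stability} hypothesis on $\equiv$ enters: stability is precisely the condition that guarantees $\rel{F}_{\equiv}(R)\circ\rel{F}_{\equiv}(S)\subseteq\rel{F}_{\equiv}(R\circ S)$, so that the relational composite of two simulations is again a simulation. I would cite or reconstruct the composition lemma for stable orders from Hughes and Jacobs~\cite{HughesJacobs04}, apply it to $R$ and $S$ both equal to the similarity relation, and conclude that the union of all $\equiv_a$-simulations (the largest one, i.e.\ similarity) is closed under reflexivity, symmetry, and transitivity, hence an equivalence.

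The main obstacle is transitivity, and specifically verifying that the given $\equiv$ satisfies the stability condition makes the composition $\equiv\circ\rel{F}(R)\circ\equiv\circ\equiv\circ\rel{F}(S)\circ\equiv$ collapse correctly. The delicate point is absorbing the two middle copies of $\equiv$ into a single relation-lifting witness: one needs that $\rel{F}(R)\circ{\equiv}\circ\rel{F}(S)$ can be folded into $\rel{F}_{\equiv}(R\circ S)$ up to the outer $\equiv$'s, and this is exactly the content of stability. I would make sure to state the precise form of the stability axiom being used and show the witness construction (choosing $w\in F(\R)$ for the composite relation from the witnesses for $R$ and $S$) explicitly, since this is the only nonroutine calculation.

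For the second part, the plan is to specialize to $\equiv\ =\ {=}$, the equality order. Then $\rel{F}_{=}(R) = {=}\circ\rel{F}(R)\circ{=} = \rel{F}(R)$, so the simulation condition ``$(c(x),d(y))\in\rel{F}_{=}(R)$'' reduces verbatim to the bisimulation condition ``$(c(x),d(y))\in\rel{F}(R)$''. Hence $=$-simulations are literally $F$-bisimulations and the two notions of similarity coincide. This part is essentially a one-line unfolding of the definitions once the first part is in place; the only thing to remark is that equality is trivially a stable functorial equivalence relation, so the first part applies and confirms that $=$-similarity (i.e.\ bisimilarity) is indeed an equivalence.
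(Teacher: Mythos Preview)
Your argument is correct: reflexivity from reflexivity of $\equiv$, symmetry from symmetry of $\equiv$ together with $\rel{F}(R)^{-1}=\rel{F}(R^{-1})$, transitivity from the stability axiom (which is exactly what lets you absorb the middle $\equiv\circ\equiv$ and fold $\rel{F}(R)\circ{\equiv}\circ\rel{F}(S)$ into $\rel{F}_{\equiv}(R\circ S)$), and the second part by the one-line observation that ${=}\circ\rel{F}(R)\circ{=}=\rel{F}(R)$.

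Note, however, that the paper does \emph{not} supply its own proof of this proposition: it is stated immediately after the recap of Hughes and Jacobs' simulation framework and is left unproved, the result being implicitly attributed to~\cite{HughesJacobs04}. What you have written is precisely the standard Hughes--Jacobs argument the paper is invoking, so there is no alternative route to compare against; your proposal simply spells out what the paper takes for granted. One small remark: in your transitivity step you speak of applying the composition lemma ``to $R$ and $S$ both equal to the similarity relation'', but strictly speaking one shows that the composite of any two $\equiv$-simulations is again an $\equiv$-simulation and then observes that similarity, being the union of all $\equiv$-simulations, is therefore closed under composition; this is what you intend, but the phrasing could be tightened.
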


\section{Natural transformations and bisimulations}

Natural transformations are the natural way to relate two
functors. 
Given $F$ and $G$, two functors on $\Sets$, a natural transformation 
$\alpha:F\Rightarrow G$ is defined as a family of functions 
$\alpha_{X}:FX\rightarrow GX$ such that, for all $f:X\lra Y$, 
$Gf\circ \alpha_{X}=\alpha_{Y}\circ Ff$.
We are particularly interested in the natural transformations relating 
$\mult$ and $\ps$, and those between the functors defining probabilistic 
transition systems and probabilistic distributions. 
For the sake of conciseness we will often omit the action component $A$ 
when working with these functors; this won't affect the definitions 
nor the results.

\begin{proposition}
The support of multisets,
$\{\cdot\}_{X}:\mult(X)\lra\ps(X)$, gives rise to a natural
transformation $\{\cdot\}:\mult\Rightarrow\ps$.

Similarly, $\D_{M_{X}}:\mult_{1}([0,1]\times X)\lra\disc(X)$ given by
\[
\disc_{M}(\sum n_i \cdot (p_i, x_i))(x)=\sum_{x_i=x} n_ip_i 
\] 
induces a natural transformation
$\disc_{M}:\mult_{1}([0,1]\times\cdot)\Rightarrow\disc(\cdot)$.
\end{proposition}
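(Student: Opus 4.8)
The plan is to verify in each case that the proposed family of maps satisfies the naturality square, i.e. that for every $f:X\lra Y$ the square relating the two functors commutes. Since everything happens in $\Sets$, it suffices to chase an arbitrary element through both composites and check equality pointwise. The only external facts I need are the explicit actions on morphisms recalled earlier: $\mult(f)(\chi)(y)=\sum_{f(x)=y}\chi(x)$, the analogous pushforward $\disc(f)(p)(y)=\sum_{f(x)=y}p(x)$, and the fact that $\ps(f)$ is ordinary direct image.

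For the support transformation the claim reduces to the identity $\ps(f)(\{M\}_X)=\{\mult(f)(M)\}_Y$ for every $M\in\mult(X)$. Unfolding the left-hand side gives $\{f(x)\mid\chi_M(x)>0\}$, while the right-hand side is $\{y\mid\sum_{f(x)=y}\chi_M(x)>0\}$. These coincide because the weights $\chi_M(x)$ are natural numbers, hence non-negative: a sum of non-negative integers is strictly positive exactly when one of its summands is, so $y$ lies in the support of $\mult(f)(M)$ iff some $x$ with $f(x)=y$ already lay in the support of $M$. I expect no real difficulty here; the one thing worth stating explicitly is that non-negativity of the weights is what rules out cancellation and makes ``support of the pushforward'' agree with ``image of the support''.

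For $\disc_M$ I would first check well-definedness, which is exactly where the restriction to $\mult_1$ is used: summing $(\disc_M)_X(M)(x)$ over $x$ gives $\sum_i n_ip_i=1$, so the result is genuinely a distribution, and its value depends only on $\chi_M$ and not on the chosen enumeration. The naturality square $\disc(f)\circ(\disc_M)_X=(\disc_M)_Y\circ\mult_1(\mathrm{id}\times f)$ then becomes, after writing both sides through the characteristic function, the two double sums $\sum_{f(x)=y}\sum_{p}p\cdot\chi_M(p,x)$ and $\sum_{p}\sum_{f(x)=y}p\cdot\chi_M(p,x)$, which agree by interchanging two finite summations. The only mild obstacle is bookkeeping: under $\mult_1(\mathrm{id}\times f)$ several pairs $(p_i,x_i)$ may collapse to the same $(p_i,f(x_i))$ and their multiplicities add, so rather than reasoning with the generating-function representation I would phrase everything through $\chi_M:[0,1]\times X\lra\N$, which absorbs this collapsing automatically and reduces the whole computation to the Fubini-style swap above.
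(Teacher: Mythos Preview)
Your proof is correct and follows essentially the same approach as the paper: both verify the naturality squares by direct element chasing in $\Sets$. The paper works throughout in the formal-sum notation $\sum n_i\cdot x_i$ and is somewhat more terse, while you phrase the computation through the characteristic function $\chi_M$ and add the well-definedness check for $\disc_M$ together with the explicit non-negativity observation, but the substance of the argument is identical.
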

\begin{proof}
Let $f : X\lra Y$. We have
$(\ps f \mathop{\circ} \{\cdot\}_X)(\sum n_i\cdot x_i) = 
\ps f(\{ x_i \}) = \{ f(x_i) \} = 
\{\cdot\}_Y(\sum n_i \cdot f(x_i)) = 
(\{\cdot\}_Y\mathop{\circ}\mult f)(\sum n_i\cdot x_i)$,
which proves that $\{\cdot\}$ is a natural transformation.

In the case of $\disc_M$:
$(\disc f \mathop{\circ} \disc_{M_X})(\sum n_i \cdot (p_i,x_i))=
\disc f(\sum n_ip_i \cdot x_i) = 
\sum_{f(x_i) = y} n_ip_i \cdot y = 
\disc_{M_Y}(\sum n_i \cdot (p_i, f(x_i))) =
(\disc_{M_Y} \mathop{\circ} \mult_1f)(\sum n_i \cdot (p_i,x_i))$,
which proves that $\disc_M$ is a natural transformation.
\qed
\end{proof}

Probabilistic transition systems were defined in \cite{LarsenSkou91} as tuples
$\Partes=(Pr,Act,Can,\mu)$, where $Pr$ is a set of processes, $Act$ the set of
actions, $Can:Pr\lra \Partes(Act)$ indicates the initial offer of each
process, and $\mu_{p,a}\in\D(Pr)$ for all $p\in Pr$, $a\in Can(p)$.
Under this definition we cannot talk about ``different probabilistic
transitions'' reaching the same process, that is, whenever we have a transition
$p\stackrel{a}{\lra_{\mu}}p'$ it ``accumulates'' all possible ways to go 
from $p$ to $p'$ executing $a$.

In our opinion this is not a purely operational way to present probabilistic
systems. For instance, if we are defining the operational semantics of a
process such as $p=\frac{1}{2}a +\frac{1}{2}a$, then we would intuitively
have two different
transitions reaching the same final state $stop$, but if we were using Larsen
and Skou's original definition, we should mix them both into a single
$p\stackrel{a}{\lra_1}stop$. Certainly, we could keep these two transitions
separated under that definition, if for some reason we decided to introduce in
the set
$Pr$ two different states $stop_1$ and $stop_2$, thus obtaining
$p\stackrel{a}{\lra_{1/2}}stop_1$ and $p\stackrel{a}{\lra_{1/2}}stop_2$. Then
we observe that whether our model captures or not the existence of
two different transitions depends on the way we define our set of processes
$Pr$.

In order to get a more natural operational representation of probabilistic
systems we define them\footnote{Although Larsen and Skou defined their systems
following the reactive aproach~\cite{RGA95}, and therefore the sum of 
their probabilities is $1$ for each action $a$, we prefer to follow in this
paper the generative aproach, so that the total addition of all the
probabilities is $1$. This is just done to simplify the notation,
since all the results in this paper are equally valid for the reactive model.}
as $\M_1([0,1]\times A\times \cdot)$-coalgebras. Once we use ``ordinary''
transitions labelled by pairs $(q,a)$ to represent the probabilistic transitions
we have no problem to distinguish two ``different'' transitions
$p\stackrel{a}{\lra_{q'}}p'$, $p\stackrel{a}{\lra_{q''}}p''$, if $p'\neq p''$.
However, in such a case it would not be adequate to treat the case
$p'=p''$ in a different way. This is why we use $\M_1$ instead of $\Partes_1$ to
define our probabilistic multi-transition systems (pmts).

We can easily translate the 
classical definition of probabilistic bisimulation between 
probabilistic transition systems in \cite{LarsenSkou91}, to our own pmts's as
follows.

\begin{definition} A probabilistic bisimulation
on a coalgebra $p:X\rightarrow\M_{1}([0,1]\times A\times X)$ is an
equivalence relation $\equiv_p$ on $X$ such that, whenever
$x_1\equiv_p x_2$, taking $p(x_i)=\sum
t_{j}^{j}\cdot(p_{j}^{i},a_{j}^{i},x_{j}^{i})$, we also have
$\sum\{t_{j}^{1}\cdot p_{j}^{1}\mid a_{j}^{1}=a,\,x_{j}^{1}\in
E\}$ $=$ $\sum\{t_{j}^{2}\cdot p_{j}^{2}\mid
a_{j}^{2}=a,\,x_{j}^{2}\in E\}$, for all $a\in A$ and any
equivalence class $E$ in $X$/$\equiv_p$.
\end{definition}

In \cite{VinkRutten99} it is proved
that probabilistic bisimilarity
defined by probabilistic bisimulations coincides with categorical
$\D$-bisimilarity. By applying the functor $\D_M$ we can transform our pmts's
into their presentation as Larsen and Skou's pts's. Then it is trivial to check
that the corresponding notions of probabilistc bisimulation coincide, and
therefore they also coincide with categorical $\D$-bisimilarity.

However, that is clearly not the case if we
consider plain categorical $\M_{1}([0,1]\times A\times
\cdot)$-bisimulations. This is so because when we consider the
functor $\M_{1}([0,1]\times A\times \cdot)$, probabilistic
transitions are considered as plain transitions labelled with
pairs over $[0,1]\times A$, whose first component has no special
meaning. As a result, we have, for instance, no bisimulation
relating $x$ and $y$ if we consider $X=\{x\}$, $Y=\{y\}$,
$p_a:X\rightarrow\M_1([0,1]\times A\times X)$ with
$p_a(x)=1\cdot(1,a,x)$ and $p_b:Y\rightarrow\M_1([0,1]\times
A\times Y)$ with $p_b(y)=2\cdot(\frac{1}{2},a,y)$.

All these facts prove that our probabilistic multi-transition systems are too
concrete a representation of probabilistic
distributions, which is formally captured by the fact that the
components of the natural transformation $\D_M$ are not injective. As a
consequence, by using them we do not have a pure coalgebraic characterization
of probabilistic bisimulations. By contrast, the original definition
of pts's stands apart from the operational way, mixing
different
transitions into a single distribution. Besides it has to consider
the quotient set $\quo{X}{\equiv_p}$ when defining probabilistic bisimulations.
Our goal will be to obtain a characterization of the notion of probabilistic
bisimilarity in terms of our pmts's, and this will be done using the
notion of categorical
simulation, as we will see in Section \ref{sec4}. 
Next, we present a collection of general interesting results.
First we will see that bisimulations are preserved by natural transformations.

\begin{theorem}[\cite{Rutten00}]\label{t-bis-pre} If $R\subseteq X\times Y$ is a
bisimulation relating $a:X\lra FX$ and $b:Y\lra FY$, then $R$ is also a 
bisimulation relating $a':X\lra GX$, given by $a'=\alpha_{X}\circ a$, and 
$b':Y\lra GY$, given by $b'=\alpha_{Y}\circ b$.
\end{theorem}

\begin{corollary}
For $a$ and $a' = \alpha_X \mathop{\circ} a$, bisimulation equivalence in 
$a$ is included in bisimulation equivalence in $a'$, that is, 
$x_1\equiv_a x_2$ implies $x_1\equiv_{a'} x_2$.
\end{corollary}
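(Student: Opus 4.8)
The corollary follows directly from Theorem~\ref{t-bis-pre} once we recall the definition of bisimulation equivalence. Let me plan the argument.

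The plan is to unfold what $x_1 \equiv_a x_2$ means and then invoke the theorem pointwise. By definition, $x_1 \equiv_a x_2$ holds precisely when there exists some $F$-bisimulation $R \subseteq X \times X$ on the coalgebra $a : X \lra FX$ (that is, a bisimulation relating $a$ to itself) with $(x_1, x_2) \in R$; the relation $\equiv_a$ is the union of all such bisimulations, equivalently the largest one. So the first step is simply to take a witnessing bisimulation $R$ for the pair $(x_1, x_2)$.

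Next I would apply Theorem~\ref{t-bis-pre} in the special case $Y = X$, $b = a$, and consequently $b' = a' = \alpha_X \com a$. The theorem tells us that any $F$-bisimulation $R$ relating $a$ to $a$ is also a $G$-bisimulation relating $a'$ to $a'$. Since our chosen $R$ contains $(x_1, x_2)$ and is now known to be a $G$-bisimulation on $a'$, we conclude immediately that $x_1 \equiv_{a'} x_2$, because $\equiv_{a'}$ likewise collects all pairs related by some $G$-bisimulation on $a'$. This yields the claimed inclusion $\equiv_a \,\subseteq\, \equiv_{a'}$.

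There is essentially no hard step here: the whole content is packaged in Theorem~\ref{t-bis-pre}, and the corollary is just its pointwise reading together with the characterization of bisimulation equivalence as the union of all bisimulations. The only point deserving a word of care is the self-loop specialization — verifying that ``bisimulation relating $a$ and $b$'' with $a = b$ indeed captures the intended notion of bisimulation \emph{on} a single coalgebra used to define $\equiv_a$, so that the theorem applies verbatim. Once that identification is noted, the argument is a one-line consequence and requires no further computation.
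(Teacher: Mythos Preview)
Your argument is correct and matches the paper's approach: the paper states the corollary without proof, treating it as immediate from Theorem~\ref{t-bis-pre}, and your unfolding (take a witnessing $F$-bisimulation $R$ for $(x_1,x_2)$, apply the theorem with $Y=X$ and $b=a$ to see $R$ is a $G$-bisimulation on $a'$, hence $x_1\equiv_{a'}x_2$) is exactly the intended one-line justification.
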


A general converse result cannot be expected because in general
there is no canonical way to transform $G$ into $F$.
Since the main objective in this paper is to relate
$\mult$-bisimulations with $\ps$ and $\disc$-bisimulations, we looked 
for particular properties of the natural
transformations relating these functors which could help us to get
the desired general results covering in particular these two
cases. This is how we have obtained the concept of quotient
functors that we develop in the following.

\begin{definition}
Let $F$ be an endofunctor on $\Sets$ and $\equiv$ a functorial equivalence 
relation $\equiv_{X}\subseteq FX\times FX$. 
We define the \emph{quotient functor} $\quo{F}{\equiv}$ by
$(\quo{F}{\equiv})(X)=\quo{FX}{\equiv_X}$, and for any
$f:X\lra Y$, $u\in FX$, and $\overline{u}$ its equivalence class,
$(\quo{F}{\equiv})(f)(\overline{u})=\overline{F(f)(u)}$, 
that is well defined since $\equiv$ is functorial.
\end{definition}

\begin{definition}
We say that a functor $G$ is the quotient of $F$ under a
functorial equivalence relation $\equiv$, whenever
$\quo{F}{\equiv}$ and $G$ are isomorphic, which means that there
is a pair of natural transformations
$\alpha:\quo{F}{\equiv}\Rightarrow G$ and $\beta:G\Rightarrow\quo{F}{\equiv}$ 
such that $\beta\circ\alpha = Id_{\quo{F}{\equiv}}$ and $\alpha\circ\beta=Id_G$.
\end{definition}

Given a natural transformation $\alpha:F\Rightarrow G$, we write
$\equiv^\alpha$ for the family of equivalence relations
$\equiv^\alpha_X \subseteq FX \times FX$ defined by the kernel of $\alpha$:
\[
u_1 \equiv_X^\alpha u_2 \iff \alpha_{X}(u_1)=\alpha_X(u_2) \,.
\]

\begin{proposition}
For every natural transformation $\alpha:F\Rightarrow G$, 
$\equiv^\alpha$ is functorial.
\end{proposition}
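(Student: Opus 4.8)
The plan is to unfold what it means for the family $\equiv^\alpha$ to be functorial and reduce it to a direct application of the naturality square for $\alpha$. Recall from the definition of an order on a functor that functoriality here means preservation under renaming: for every $f : X \lra Y$, whenever $u_1 \equiv^\alpha_X u_2$ we must also have $Ff(u_1) \equiv^\alpha_Y Ff(u_2)$. Since each $\equiv^\alpha_X$ is by construction the kernel of the function $\alpha_X$, it is automatically reflexive, symmetric and transitive; hence the only thing that genuinely needs checking is this closure under renaming.

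So I would fix $f : X \lra Y$ and $u_1, u_2 \in FX$ with $u_1 \equiv^\alpha_X u_2$, i.e.\ $\alpha_X(u_1) = \alpha_X(u_2)$. First I would rewrite $\alpha_Y(Ff(u_1))$ using the naturality condition $Gf \circ \alpha_X = \alpha_Y \circ Ff$, obtaining $\alpha_Y(Ff(u_1)) = Gf(\alpha_X(u_1))$; the same identity applied to $u_2$ gives $\alpha_Y(Ff(u_2)) = Gf(\alpha_X(u_2))$. The assumed equality $\alpha_X(u_1) = \alpha_X(u_2)$ then yields $Gf(\alpha_X(u_1)) = Gf(\alpha_X(u_2))$, and chaining these three steps gives $\alpha_Y(Ff(u_1)) = \alpha_Y(Ff(u_2))$, which is exactly $Ff(u_1) \equiv^\alpha_Y Ff(u_2)$ as required.

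There is no real obstacle: the statement is a one-line diagram chase whose single ingredient is the naturality square of $\alpha$, together with the elementary fact that any kernel relation is an equivalence. The only point worth stating explicitly is that the equivalence-relation part comes for free from the definition of kernel, so that the actual content of the proposition is precisely the renaming clause, which is where naturality does the work.
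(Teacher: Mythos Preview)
Your proof is correct and follows essentially the same approach as the paper: both reduce the functoriality of $\equiv^\alpha$ to a direct application of the naturality square $\alpha_Y \circ Ff = Gf \circ \alpha_X$, concluding $\alpha_Y(Ff(u_1)) = \alpha_Y(Ff(u_2))$ from $\alpha_X(u_1) = \alpha_X(u_2)$. Your additional remark that reflexivity, symmetry and transitivity come for free from the kernel construction is a harmless elaboration not present in the paper's more compressed version.
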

\begin{proof}
We need to show that, for any $f:X\lra Y$, whenever $u_1 \equiv^\alpha_X u_2$,
that is, $\alpha_X(u_1)=\alpha_X(u_2)$, we also have $Ff(u_1)\equiv_Y^\alpha
Ff(u_2)$, that is $\alpha_Y(F(f)(u_1))=\alpha_Y(F(f)(u_2))$; this follows
because $\alpha_Y\circ F(f)=G(f)\circ\alpha_X$.\qed
\end{proof}

If every component $\alpha_X$ of a natural transformation is surjective,
$\alpha$ is said to be epi.

\begin{proposition}
Whenever $\alpha$ is epi, $G$ is the quotient of $F$ under
$\equiv^\alpha$, just considering the inverse natural
transformation $\alpha^{-1}:G\Rightarrow F/{\equiv}$ given by
$\alpha_{X}^{-1}:G(X)\lra(\quo{F}{\equiv^\alpha})(X)$ with
$\alpha_{X}^{-1}(v)=\overline{u}$ where $\alpha_X(u)=v$.
\end{proposition}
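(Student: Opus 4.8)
The plan is to realise the isomorphism promised by the definition of ``quotient of $F$'' through the set-theoretic factorisation of each component $\alpha_X$. First I would note that, by the previous proposition, $\equiv^\alpha$ is functorial, so $\quo{F}{\equiv^\alpha}$ is a well-defined functor and the claim even makes sense. Since $\equiv^\alpha_X$ is by construction the kernel of $\alpha_X$, the first isomorphism theorem for sets yields a unique induced map $\overline{\alpha}_X : \quo{FX}{\equiv^\alpha_X} \lra GX$ with $\overline{\alpha}_X(\overline{u}) = \alpha_X(u)$; this map is injective precisely because $\equiv^\alpha$ is that kernel, and surjective because $\alpha$ is epi, hence a bijection. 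I would then observe that the map $\alpha^{-1}_X$ of the statement is exactly its inverse: it is well defined because surjectivity of $\alpha_X$ provides some $u$ with $\alpha_X(u)=v$ for each $v\in GX$, while any two such preimages are $\equiv^\alpha_X$-related and so determine the same class $\overline{u}$.

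Next I would check that $\alpha^{-1}$ is natural, which together with $\overline{\alpha}$ supplies the two natural transformations required by the definition, with $\overline{\alpha}:\quo{F}{\equiv^\alpha}\Rightarrow G$ playing the role of the forward map. For $f:X\lra Y$ and $v\in GX$, pick $u$ with $\alpha_X(u)=v$. On one side $(\quo{F}{\equiv^\alpha})(f)(\alpha^{-1}_X(v)) = (\quo{F}{\equiv^\alpha})(f)(\overline{u}) = \overline{Ff(u)}$ by the definition of the quotient functor on morphisms. On the other side, naturality of $\alpha$ gives $Gf(v)=Gf(\alpha_X(u))=\alpha_Y(Ff(u))$, whence $\alpha^{-1}_Y(Gf(v)) = \overline{Ff(u)}$. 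The two composites thus agree, proving naturality of $\alpha^{-1}$; alternatively one may invoke the standard fact that the componentwise inverse of a natural transformation all of whose components are bijections is itself natural, and deduce naturality of $\overline{\alpha}$ from that of $\alpha^{-1}$.

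Finally I would assemble the pieces: the equalities $\overline{\alpha}_X \com \alpha^{-1}_X = Id_{GX}$ and $\alpha^{-1}_X \com \overline{\alpha}_X = Id_{(\quo{F}{\equiv^\alpha})(X)}$ are immediate from the definitions of the two maps, so $(\overline{\alpha},\alpha^{-1})$ is a pair of mutually inverse natural transformations and $G$ is the quotient of $F$ under $\equiv^\alpha$, as claimed. I expect no deep obstacle here: the whole argument is the first isomorphism theorem applied objectwise and then made coherent across morphisms. The only points demanding care are the well-definedness of $\alpha^{-1}_X$ and the naturality square, both of which hinge on two facts already available, namely that $\equiv^\alpha$ is exactly the kernel of $\alpha$ and that $\alpha$ is a natural transformation; surjectivity of each $\alpha_X$ is what guarantees that preimages exist so that $\alpha^{-1}_X$ is total.
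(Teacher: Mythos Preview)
Your proof is correct. The paper itself states this proposition without proof, treating it as evident from the data supplied in the statement; your argument spells out exactly the routine verification the authors left implicit, namely the first isomorphism theorem applied componentwise together with a naturality check, so there is nothing to compare against and no gap to report.
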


\begin{corollary}
$\ps$ is the quotient of $\mult$ under the kernel of the natural
transformation $\{\cdot\}:\mult\Rightarrow\ps$.
\end{corollary}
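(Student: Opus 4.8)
The plan is to deduce this corollary directly from the immediately preceding proposition, which asserts that whenever a natural transformation $\alpha:F\Rightarrow G$ is epi, then $G$ is the quotient of $F$ under $\equiv^\alpha$. Instantiating $F=\mult$, $G=\ps$, and $\alpha=\{\cdot\}$, the entire statement reduces to verifying the single hypothesis of that proposition, namely that $\{\cdot\}:\mult\Rightarrow\ps$ is epi. By the definition given just above, this amounts to showing that each component $\{\cdot\}_X:\mult(X)\lra\ps(X)$ is surjective.

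To establish surjectivity, I would take an arbitrary finite set $S\in\ps(X)$ and exhibit a multiset whose image under the support map is exactly $S$. The natural candidate is the \emph{canonical associated multiset} $\sum_{y_i\in S}1\cdot y_i$ introduced in Section~2: its characteristic function takes the value $1$ on precisely the elements of $S$ and $0$ elsewhere, so by the definition of support its image under $\{\cdot\}_X$ is $S$ itself. Since $X$ and $S$ were arbitrary, every component $\{\cdot\}_X$ is surjective and hence $\{\cdot\}$ is epi.

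With epiness in hand, the preceding proposition applies verbatim and yields that $\ps$ is isomorphic to $\quo{\mult}{\equiv^{\{\cdot\}}}$, which is exactly the assertion of the corollary. I would note that no further bookkeeping is needed here, because $\equiv^{\{\cdot\}}$ is by definition the kernel of $\{\cdot\}$, and the earlier proposition on $\equiv^\alpha$ already guarantees that this kernel is functorial, so that the quotient functor $\quo{\mult}{\equiv^{\{\cdot\}}}$ is well defined.

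The only point that genuinely demands attention — and the nearest thing to an obstacle — is the intended reading of the powerset functor. The surjectivity argument above produces only \emph{finite} supports, and so it succeeds onto all of $\ps(X)$ precisely because, following the finiteness convention adopted in Section~2, every multiset is finite and thus has finite support; accordingly $\ps$ must be understood here as the \emph{finite} powerset functor. Were $\ps$ the full powerset, an infinite subset of $X$ could not arise as the support of any multiset, so $\{\cdot\}_X$ would fail to be surjective and the natural transformation would not be epi. Once this reading is fixed, the argument is immediate.
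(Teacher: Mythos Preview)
Your proposal is correct and follows exactly the route the paper intends: the corollary is stated without proof as an immediate instance of the preceding proposition, and you have simply supplied the one missing detail, namely that $\{\cdot\}$ is epi. Your observation that this forces $\ps$ to be read as the finite powerset functor is apt and implicit in the paper's standing finiteness convention on multisets.
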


\begin{corollary}
$\disc$ is the quotient of $\mult_1([0,1]\times \cdot)$ under the kernel
of the natural transformation 
$\disc_{\M}:\mult_1([0,1]\times\cdot)\Rightarrow\disc$.
\end{corollary}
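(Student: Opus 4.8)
The plan is to invoke the preceding proposition, which states that whenever a natural transformation $\alpha:F\Rightarrow G$ is epi, $G$ is the quotient of $F$ under $\equiv^\alpha$. Taking $F=\mult_1([0,1]\times\cdot)$, $G=\disc$, and $\alpha=\disc_M$, the whole corollary reduces to verifying that $\disc_M$ is epi, i.e.\ that each component $(\disc_M)_X:\mult_1([0,1]\times X)\lra\disc(X)$ is surjective. The natural transformation property of $\disc_M$ has already been established in the earlier proposition, so nothing further is needed on that front.

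To prove surjectivity, I would fix a distribution $d\in\disc(X)$. Since we deal only with discrete distributions of finite support, write its support as $\{x_1,\ldots,x_k\}$ and set $q_j=d(x_j)$, so that $\sum_{j=1}^k q_j=1$. The natural candidate for a preimage is the multiset $M=\sum_{j=1}^k 1\cdot(q_j,x_j)$, which records, for each point of the support, a single transition carrying exactly the probability that $d$ assigns to that point.

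First I would check that $M$ indeed lies in $\mult_1([0,1]\times X)$: its total weight is $\sum_{j} 1\cdot q_j=\sum_j q_j=1$, as required by the normalization constraint defining $\mult_1$. Then, applying the defining formula $\disc_M(\sum n_i\cdot(p_i,x_i))(x)=\sum_{x_i=x}n_ip_i$, I would compute $\disc_M(M)(x_j)=1\cdot q_j=q_j=d(x_j)$ for every $x_j$ in the support, and $\disc_M(M)(x)=0=d(x)$ otherwise, so that $\disc_M(M)=d$. This exhibits a preimage of an arbitrary $d$ and establishes that $(\disc_M)_X$ is surjective.

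As this holds for every $X$, $\disc_M$ is epi, and the preceding proposition immediately yields that $\disc$ is the quotient of $\mult_1([0,1]\times\cdot)$ under $\equiv^{\disc_M}$, i.e.\ under the kernel of $\disc_M$. The argument runs entirely parallel to the one behind the previous corollary for $\{\cdot\}:\mult\Rightarrow\ps$; I do not expect any genuine obstacle, the only point requiring care being that the chosen preimage respects the normalization constraint, which the choice of unit multiplicities guarantees.
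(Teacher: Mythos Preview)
Your proposal is correct and follows exactly the approach the paper intends: the corollary is stated without proof, as an immediate instance of the preceding proposition once one observes that $\disc_M$ is epi, and your explicit construction of a preimage $\sum_j 1\cdot(q_j,x_j)$ is precisely the routine verification the paper leaves implicit. The only caveat is your assumption that distributions have finite support; the paper is not fully explicit on this point, but since multisets are restricted to finite support this reading is the one that makes the corollary true, and your argument handles it correctly.
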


\section{$\equiv^\alpha$-simulations through quotients of
bisimulations}\label{sec4}

Let us start by studying the relationships between coalgebras corresponding 
to functors related by an epi natural transformation.

\begin{definition}
Let $\alpha:F\Rightarrow G$ be a natural transformation and
$a:X\lra FX$ an $F$-coalgebra. 
We define the $\alpha$-image of $a$ as the coalgebra $a_{\alpha}:X\lra GX$ 
given by $a_\alpha=\alpha_X\circ a$.
\end{definition}

\begin{definition}
Given a natural transformation $\alpha:F\Rightarrow G$ and
a $G$-coalgebra $b:X\lra GX$, we say that $a:X\lra FX$ is a concrete 
$F$-representation of $b$ iff $b=\alpha_X\circ a$.
\end{definition}

The following result follows immediately from the previous definitions.

\begin{proposition}\label{p-Frep}
If $\alpha$ is epi then every $G$-coalgebra has an $F$-representation.
\end{proposition}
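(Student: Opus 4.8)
The plan is to realize the desired $F$-coalgebra by factoring the given map $b$ through the component $\alpha_X$. Since $\alpha$ is epi, by definition its component $\alpha_X:FX\lra GX$ is a surjective function; note that for this particular argument only the single component at $X$ matters, so the naturality of $\alpha$ plays no role here. In $\Sets$ every surjection splits, so, invoking the axiom of choice, I would pick for each $v\in GX$ some preimage under $\alpha_X$ and thereby obtain a section $s:GX\lra FX$ satisfying $\alpha_X\circ s=\mathrm{id}_{GX}$.

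With such a section in hand the construction is immediate: define the $F$-coalgebra $a:X\lra FX$ by $a=s\circ b$. Then $\alpha_X\circ a=\alpha_X\circ s\circ b=\mathrm{id}_{GX}\circ b=b$, which is precisely the requirement that $a$ be a concrete $F$-representation of $b$ in the sense of the preceding definition. This already completes the verification.

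There is essentially no hard step in this proof; the only point deserving care is the appeal to choice when building the section $s$, since everything after that is a one-line composition. It is worth observing, however, that the section $s$—and hence the representation $a$—is in general far from unique, because each $v\in GX$ may admit many preimages under $\alpha_X$. This non-uniqueness is exactly the phenomenon that later prevents $G$-bisimulations from being reflected to $F$-bisimulations directly, and it is what motivates the subsequent passage to $\equiv^\alpha$-simulations.
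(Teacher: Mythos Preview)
Your proof is correct and is exactly the argument the paper has in mind: the paper merely asserts that the result ``follows immediately from the previous definitions,'' and your construction via a section of the surjection $\alpha_X$ is the obvious way to spell this out. Your additional remark on non-uniqueness of the representation is apt and anticipates precisely the issue addressed in the subsequent results.
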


Next we relate $G$-bisimulations with $\equiv^\alpha$-simulations:

\begin{theorem}\label{t-epi}
Let $\alpha:F\Rightarrow G$ be an epi natural transformation and
$b_1:X_1 \lra GX_1$, $b_2:X_2\lra GX_2$ two $G$-coalgebras, with 
concrete $F$-representations $a_1:X_1\lra FX_1$ and $a_2:X_2\lra GX_2$.
Then, the $G$-bisimulations relating $b_1$ and $b_2$ are precisely the
$\equiv^\alpha$-simulations relating $a_1$ and $a_2$.
\end{theorem}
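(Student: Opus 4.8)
The plan is to isolate a single pointwise ``transfer lemma'' that translates the simulation lifting on $F$ into the bisimulation lifting on $G$, and then simply substitute the representations $b_i=\alpha_{X_i}\circ a_i$. Writing $r_1,r_2$ for the projections of $R$ into $X_1$ and $X_2$, I would prove that for every $u\in FX_1$ and $v\in FX_2$,
\[
(u,v)\in\rel{F}_{\equiv^\alpha}(R)\iff
\bigl(\alpha_{X_1}(u),\alpha_{X_2}(v)\bigr)\in\rel{G}(R)\,.
\]
Granting this, the theorem is immediate: $R$ is a $G$-bisimulation for $b_1,b_2$ iff $(b_1(x),b_2(y))\in\rel{G}(R)$ for all $(x,y)\in R$; since $b_i(z)=\alpha_{X_i}(a_i(z))$, the right-hand condition reads $(\alpha_{X_1}(a_1(x)),\alpha_{X_2}(a_2(y)))\in\rel{G}(R)$, which by the transfer lemma is exactly $(a_1(x),a_2(y))\in\rel{F}_{\equiv^\alpha}(R)$, i.e.\ $R$ is an $\equiv^\alpha$-simulation for $a_1,a_2$.

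To prove the transfer lemma I would use the image description of relation lifting, $\rel{F}(R)=\{(Fr_1(w),Fr_2(w))\mid w\in FR\}$ and likewise for $G$, together with the naturality squares $\alpha_{X_i}\circ Fr_i=Gr_i\circ\alpha_R$. Unfolding the expanded form of the simulation lifting, $(u,v)\in\rel{F}_{\equiv^\alpha}(R)$ means there is some $w\in FR$ with $u\equiv^\alpha_{X_1}Fr_1(w)$ and $Fr_2(w)\equiv^\alpha_{X_2}v$; by the definition of $\equiv^\alpha$ as the kernel of $\alpha$, this says $\alpha_{X_1}(u)=\alpha_{X_1}(Fr_1(w))$ and $\alpha_{X_2}(Fr_2(w))=\alpha_{X_2}(v)$. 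Applying naturality, with $w':=\alpha_R(w)\in GR$ these become $\alpha_{X_1}(u)=Gr_1(w')$ and $Gr_2(w')=\alpha_{X_2}(v)$, which is precisely the witnessed statement $(\alpha_{X_1}(u),\alpha_{X_2}(v))\in\rel{G}(R)$. This already yields the forward implication. For the converse, a witness $w'\in GR$ for $(\alpha_{X_1}(u),\alpha_{X_2}(v))\in\rel{G}(R)$ must be pulled back to some $w\in FR$ with $\alpha_R(w)=w'$, and the same naturality computation run backwards recovers $u\equiv^\alpha Fr_1(w)$ and $Fr_2(w)\equiv^\alpha v$.

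The one genuine obstacle is exactly this pullback step: it requires $\alpha_R$ to be surjective, which is where the hypothesis that $\alpha$ is epi enters, read as surjectivity of the component $\alpha_R$ at the object $R$ (the forward direction, by contrast, never uses epi, in agreement with Theorem~\ref{t-bis-pre}). Beyond this I expect only bookkeeping: functoriality of $\equiv^\alpha$, established earlier, guarantees that $\rel{F}_{\equiv^\alpha}$ is a legitimate simulation lifting, so the two notions being compared are well defined. A stylistic alternative would be to phrase the lemma through the identity $\rel{G}(R)=(\alpha_{X_1}\times\alpha_{X_2})(\rel{F}(R))$, valid for epi $\alpha$ and derivable once from the span description, combined with the observation that pre- and post-composing $\rel{F}(R)$ with $\equiv^\alpha$ merely collects all $F$-elements lying over a fixed pair of $\alpha$-images; but the direct computation above seems cleanest.
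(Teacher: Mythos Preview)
Your proposal is correct and follows essentially the same approach as the paper: both isolate the key identity $\rel{F}_{\equiv^\alpha}(R)=\{(u,v)\mid(\alpha_{X_1}(u),\alpha_{X_2}(v))\in\rel{G}(R)\}$, prove it by unfolding the simulation lifting, rewriting $\equiv^\alpha$ as equality under $\alpha$, invoking the naturality squares $\alpha_{X_i}\circ Fr_i=Gr_i\circ\alpha_R$, and using surjectivity of $\alpha_R$ for the converse witness; the theorem then falls out by substituting $b_i=\alpha_{X_i}\circ a_i$. Your explicit remark that only the converse direction needs the epi hypothesis is a nice addition, but the argument itself is the paper's.
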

\begin{proof}
\footnote{It is not difficult to present this proof as a commutative
diagram. Then one has to check that all the ``small squares'' in the diagram
are indeed commutative, in order to be able to conclude commutativity of the
full diagram. This is what we have carefully done in our proof above.}

Let us show that, for every relation $R\subseteq X_1\times X_2$,
\[
\rel{F}_{\equiv^\alpha}(R) =
\{(u,v)\in FX_1\times FX_2\mid (\alpha_{X_1}(u),\alpha_{X_2}(v))\in\rel{G}(R)\} \,.
\]
We have, unfolding the definition of $\rel{F}_{\equiv^\alpha}(R)$ and using the
fact that $\alpha$ is a natural transformation:
\[
\begin{array}{rcl}
\rel{F}_{\equiv^\alpha}(R) &= 
&\{ (u,v) \in FX_1\times FX_2 \mid
   \exists w\in FR.\, u\equiv^\alpha Fr_1(w) \land Fr_2(w)\equiv^\alpha v \} \\
&=&\{ (u,v) \in FX_1\times FX_2 \mid
   \exists w\in FR.\, \alpha_{X_1}(u) = \alpha_{X_1}(Fr_1(w)) \land{}\\
&&\phantom{\{ (u,v) \in FX_1\times FX_2 \mid \exists w\in FR.\,}
                      \alpha_{X_2}(v) = \alpha_{X_2}(Fr_2(w)) \} \\
&=&\{ (u,v) \in FX_1\times FX_2 \mid
   \exists w\in FR.\, \alpha_{X_1}(u) = Gr_1(\alpha_R(w)) \land{}\\
&&\phantom{\{ (u,v) \in FX_1\times FX_2 \mid \exists w\in FR.\,}
                      \alpha_{X_2}(v) = Gr_2(\alpha_R(w)) \}\,.
\end{array}
\]
On the other hand,
\[
\rel{G}(R) = \{ (x,y)\in GX_1\times GX_2\mid 
                \exists z\in GR.\, Gr_1(z)= x \land Gr_2(z)=y \} \,.
\]
Now, if $(u,v)\in \rel{F}_{\equiv^\alpha}(R)$, by taking $\alpha_R(w)$ as the value
of $z\in GR$ we have that $(\alpha_{X_1}(u), \alpha_{X_2}(v)) \in \rel{G}(R)$.
Conversely, if $(\alpha_{X_1}(u), \alpha_{X_2}(v)) \in \rel{G}(R)$ is witnessed
by $z$, let $w\in FR$ be such that $\alpha_R(w) = z$, which must exists because
$\alpha$ is epi; it follows that $(u,v)\in \rel{F}_{\equiv^\alpha}(R)$.

Then, $(b_1(x), b_2(y))\in \rel{G}(R)$ if and only if
$(a_1(x), a_2(x)) \in \rel{F}_{\equiv^\alpha}(R)$, from where it follows that $R$
is a $G$-bisimulation if and only if it is a $\equiv^\alpha$-simulation.
\qed
\end{proof}

\begin{corollary}
(i) Bisimulations between labelled transition systems are just 
$\equiv^{\{\cdot\}}$-simulations between multi-transition systems.
(ii) Bisimulations between probabilistic systems are just
$\equiv^{\disc_M}$-simulations between (an appropriate class of) 
multi-transition systems.
\end{corollary}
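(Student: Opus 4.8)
The plan is to read both (i) and (ii) as immediate instances of Theorem~\ref{t-epi}: in each case we have an epi natural transformation $\alpha:F\Rightarrow G$, the $G$-coalgebras are the systems we care about, and the $F$-coalgebras are the multi-transition representations. Once the hypotheses of Theorem~\ref{t-epi} are verified, the theorem literally says that the $G$-bisimulations coincide with the $\equiv^\alpha$-simulations between concrete $F$-representations, which is exactly what each part asserts. So the only genuine work is to check that $\{\cdot\}$ and $\disc_M$ are epi and that the systems in question really are $\ps$- and $\disc$-coalgebras.

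For (i) I would take $F=\mult(A\times\cdot)$, $G=\ps(A\times\cdot)$ and $\alpha=\{\cdot\}$. The one hypothesis left to check is that $\{\cdot\}$ is epi, i.e.\ that each $\{\cdot\}_X$ is surjective; but this is immediate, since any finite subset $U=\{y_1,\dots,y_n\}$ is the support of its canonical multiset $\sum_{y_i\in U}1\cdot y_i$ (this surjectivity is in fact already implicit in the earlier corollary that $\ps$ is the quotient of $\mult$). Since labelled transition systems are precisely $\ps(A\times\cdot)$-coalgebras, Proposition~\ref{p-Frep} guarantees each of them a concrete $\mult(A\times\cdot)$-representation, and Theorem~\ref{t-epi} then identifies their $\ps$-bisimulations with the $\equiv^{\{\cdot\}}$-simulations between any chosen representations.

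For (ii) I would proceed analogously with $F=\mult_1([0,1]\times A\times\cdot)$, $G=\disc$ and $\alpha=\disc_M$. Here the phrase ``appropriate class'' becomes essential exactly at the epi check: $\disc_M$ is surjective onto finite-support distributions only because its source is restricted to $\mult_1$. Given a distribution $d$ with $d(x_i)=q_i$ and $\sum_i q_i=1$, the multiset $\sum_i 1\cdot(q_i,x_i)$ lies in $\mult_1$ (its weighted sum $\sum_i 1\cdot q_i$ equals $1$) and satisfies $\disc_M(\sum_i 1\cdot(q_i,x_i))=d$, so $\disc_M$ is epi. Proposition~\ref{p-Frep} then supplies each probabilistic system with a concrete $\mult_1([0,1]\times A\times\cdot)$-representation, and Theorem~\ref{t-epi} equates its $\disc$-bisimulations with the $\equiv^{\disc_M}$-simulations between representations. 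To read the statement as being about probabilistic bisimulations in the sense of Larsen and Skou rather than about abstract $\disc$-bisimulations, I would finally invoke the coincidence recalled above from \cite{VinkRutten99} between probabilistic bisimilarity and categorical $\disc$-bisimilarity.

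The main obstacle I anticipate lies not in (i), which reduces to a one-line surjectivity remark, but in keeping the bookkeeping for (ii) honest: one must make sure that the restriction to $\mult_1$ is exactly what makes $\disc_M$ simultaneously well defined into $\disc$ and surjective, and that the classical notion of probabilistic bisimulation is faithfully captured by the categorical $\disc$-bisimulation \emph{before} Theorem~\ref{t-epi} is applied. Everything else is a mechanical specialization of that theorem.
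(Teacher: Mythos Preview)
Your proposal is correct and matches the paper's approach exactly: the corollary is stated without proof, as an immediate specialization of Theorem~\ref{t-epi} to the epi natural transformations $\{\cdot\}$ and $\disc_M$ (whose epi-ness was already used in the two earlier corollaries identifying $\ps$ and $\disc$ as quotients of the multiset functors). If anything, you supply more detail than the paper does---the explicit surjectivity checks and the appeal to \cite{VinkRutten99}---but all of it is faithful to how the surrounding text sets things up.
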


Let us illustrate this result by means of some simple examples using
the natural transformation $\{\cdot\}:\mult\rightarrow\ps$. 
If we consider the ordinary transition systems
$s_X:\{x,x'\}\lra\ps(\{x,x'\})$, with $s_X(x)=\{x'\}$, $s_X(x')=\emptyset$, and
$s_Y:\{y,y'_1,y'_2\}\lra\ps(\{y,y'_1,y'_2\})$ with
$s_Y(y)=\{y'_1,y'_2\}$, $s_Y(y'_1)=\emptyset$, and
$s_Y(y'_2)=\emptyset$, we have a simple $\ps$-bisimulation
relating the initial states $x$ and $y$, given by
$R=\{(x,y),(x',y'_1),(x',y'_2)\}$.

Denoting by $s_X^1$ and $s_Y^1$ the canonical $\mult$-representations of $s_X$ 
and $s_Y$, obtained by the embedding of sets into multisets,
it is obvious that there is no $\mult$-bisimulation relating $x$ and $y$. 
But if we consider $s_X^2(x)=\{2\cdot x'\}$, $s_X^2(x')=\emptyset$, we have 
now an $\mult$-bisimulation between the
multi-transition systems $s_X^2$ and $s_Y^1$ relating $x$ and $y$. 
And, by Theorem~\ref{t-epi}, we have that $s_X^1$ is also 
$\equiv^{\{\cdot\}}$-simulated by $s_Y^1$, since $\{s_X^1\}_\M=\{s_X^2\}_\M=s_X$
and $s_X$ and $s_Y$ are $\ps$-bisimilar. Obviously, the
same happens for any $\{\cdot\}$-representation of $s_X$, $s_X^k$
with $s_X^k=\{k\cdot x'\}$ and $s_X^k(x')=\emptyset$.

In the example above we got the $\equiv^{\{\cdot\}}$-simulation by proving 
that there are $\mathcal{M}$-representations of the considered coalgebras 
for which the given relation is also an $\mathcal{M}$-bisimulation. 
However, this is not necessary as the following counterexample shows:

Let us consider $t_X:\{x\}\lra\ps(\{x\})$ with
$t_X(x)=\{x\}$ and $Y=\{\beta\mid\beta \in \mathbb{N^*}, \beta_i\leq i\}$ with
$t_Y(\beta)=\{\beta\circ\langle j\rangle\mid \beta\circ\langle j\rangle\in Y\}$.
It is clear that $R=\{(x,\beta)\mid \beta\in Y\}$ is (the only) 
$\ps$-bisimulation relating $x$ and $\epsilon$, the initial states of $t_X$ 
and $t_Y$. 
However, in this case there exists no $\mult$-bisimulation
relating two $\mult$-representations of $t_X$ and $t_Y$, because
$|t_Y(\beta)|=|\beta|+1$ and therefore we would need a representation 
$t^k_X$ with $t^k_X(x)=\{k\cdot x\}$ such that $k\geq l$ for all 
$l\in\mathbb{N}$, which is not possible because the definition of 
multiset does not allow the infinite repetition of any of its members.
Instead, Theorem~\ref{t-epi} shows that any two $\mult$-representations of 
$t_X$ and $t_Y$ are $\equiv^{\{\cdot\}}$-similar.

The reason why we had an $\mult$-bisimulation relating the appropriate 
$\M$-representations of the compared $\ps$-coalgebras in our first
example, was because we were under the
hypothesis of the following proposition.

\begin{proposition}\label{p-Ref}
Let $\alpha: F\Rightarrow G$ be an epi natural transformation.
Whenever a $G$-bisimulation $R$ relating $b_1:X\lra GX$ and $b_2:Y\lra GY$ is 
\emph{near injective}, which means
that $|\{ b_2(y) \mid (x,y) \in  R \}|\leq 1$ for all $x\in X$ and
$|\{b_1(x)\mid (x,y)\in R\}|\leq 1$ for all $y\in Y$, there
exist some $F$-representations of $b_1$ and $b_2$, $a_1:X\lra FX$ and 
$a_2:Y\lra FY$, respectively, such that $R$ is also a
bisimulation relating $a_1$ and $a_2$.
\end{proposition}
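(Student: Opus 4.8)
The plan is to construct the two representations pairwise, using a single common witness for each edge of $R$. Since $\alpha$ is epi, the component $\alpha_R : FR \lra GR$ is surjective; and since $R$ is a $G$-bisimulation, for every $(x,y)\in R$ there is some $z\in GR$ with $Gr_1(z)=b_1(x)$ and $Gr_2(z)=b_2(y)$. Lifting such a $z$ to some $w\in FR$ with $\alpha_R(w)=z$ and putting $a_1(x)=Fr_1(w)$, $a_2(y)=Fr_2(w)$ would do two things at once: it would make $w$ a witness for $(a_1(x),a_2(y))\in\rel{F}(R)$, and, by naturality $\alpha_{X}\circ Fr_1 = Gr_1\circ\alpha_R$, it would give $\alpha_X(a_1(x))=Gr_1(z)=b_1(x)$, so that $a_1$ is an $F$-representation of $b_1$ (symmetrically for $a_2$). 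The entire difficulty is that $a_1$ and $a_2$ must be \emph{functions}: a fixed $x$ may sit in many pairs $(x,y)\in R$, each suggesting its own witness and hence its own candidate value for $a_1(x)$.

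The near injectivity hypothesis is exactly what resolves this. I would read $R$ as a bipartite graph on the disjoint union of $X$ and $Y$. The first condition says $b_2$ is constant on the $Y$-neighbours of any fixed $x$, and the second says $b_1$ is constant on the $X$-neighbours of any fixed $y$. Propagating these equalities along an alternating path $x_0 - y_0 - x_1 - y_1 - \cdots$ (consecutive $X$-vertices share a common $Y$-neighbour, consecutive $Y$-vertices a common $X$-neighbour) shows that within each connected component $c$ of the graph all the $x$'s carry a single value $b_1(x)=U_c$ and all the $y$'s carry a single value $b_2(y)=V_c$. Because $R$ is a $G$-bisimulation, any edge $(x,y)$ of $c$ gives $(U_c,V_c)=(b_1(x),b_2(y))\in\rel{G}(R)$, witnessed by some $z_c\in GR$; as $\alpha$ is epi I then choose one $w_c\in FR$ with $\alpha_R(w_c)=z_c$, a single such $w_c$ per component.

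With these choices the construction is forced. Define $a_1(x)=Fr_1(w_c)$ for every $x$ in the $X$-part of its component $c$, and $a_2(y)=Fr_2(w_c)$ for every $y$ in the $Y$-part; for states $x$ outside the domain of $R$ and $y$ outside its range take any $F$-representation of $b_1(x)$, resp. $b_2(y)$, which exists by Proposition~\ref{p-Frep}. These are genuine functions since each state lies in a unique component. Naturality yields $\alpha_X(a_1(x))=Gr_1(\alpha_R(w_c))=Gr_1(z_c)=U_c=b_1(x)$, and symmetrically for $a_2$, so both are $F$-representations. Finally, for any $(x,y)\in R$, lying in some component $c$, the single element $w_c\in FR$ satisfies $Fr_1(w_c)=a_1(x)$ and $Fr_2(w_c)=a_2(y)$, whence $(a_1(x),a_2(y))\in\rel{F}(R)$ and $R$ is an $F$-bisimulation relating $a_1$ and $a_2$.

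The main obstacle is precisely the well-definedness handled above: making one witness serve every pair that touches a given state. I would stress that near injectivity is not a cosmetic convenience but the exact hypothesis under which $R$ decomposes into components where $b_1$ and $b_2$ are constant, and it is this collapse that licenses a single globally coherent witness. Absent this condition the candidate values $Fr_1(w)$ for $a_1(x)$ arising from different partners $y$ need not agree — indeed the $b_2(y)$ may already differ — and no choice of witnesses can reconcile them, as the second example of the text (with $|t_Y(\beta)|=|\beta|+1$ unbounded) illustrates.
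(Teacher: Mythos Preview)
Your proof is correct and follows essentially the same architecture as the paper's: the equivalence relation the paper defines on $R$ (the transitive closure of ``share a first coordinate'' and ``share a second coordinate'') is precisely your partition into connected components of the bipartite graph, and both proofs then choose one witness per class and project it along $Fr_1$, $Fr_2$ to define the two coalgebras.

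The one substantive difference is how the per-class witness is obtained. The paper first fixes arbitrary $F$-representations $a_1$, $a_2$, invokes Theorem~\ref{t-epi} to see that $R$ is a $\equiv^\alpha$-simulation, and extracts from $(\equiv^\alpha\!\com\rel{F}\com\!\equiv^\alpha)(R)$ a pair $(a'_1(x,y),a'_2(x,y))\in\rel{F}(R)$ for each edge; it then uses near injectivity to show these pairs are $\equiv^\alpha$-coherent across a class and picks the one at a canonical representative. You instead lift a single $z_c\in GR$ directly through the epi $\alpha_R$ to a single $w_c\in FR$, bypassing Theorem~\ref{t-epi} and the auxiliary per-edge witnesses altogether. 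Your route is the more elementary and self-contained of the two; the paper's has the mild advantage of making explicit that the constructed representations differ from any given ones only up to $\equiv^\alpha$.
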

\begin{proof}
By Theorem~\ref{t-epi}, $R$ is also a $\equiv^\alpha$-simulation for any
pair of $F$-representations of $b_1$ and $b_2$; let $a_1$, $a_2$ be any
such pair.
Then, for all $(x,y)\in R$ we have 
$(a_1(x),a_2(y)) \in 
(\equiv^\alpha \mathop{\circ}\rel{F}\mathop{\circ}\equiv^\alpha)(R)$, 
and hence there exist $a'_1(x,y)\in FX$, $a'_2(x,y)\in FY$ such that 
\[
a_1(x) \equiv^\alpha a'_1(x,y),\, 
a'_2(x,y)\equiv^\alpha a_2(y)\,\textrm{ and }\,
(a'_1(x,y),a'_2(x,y))\in\rel{F}(R) \,.
\] 

We now define an equivalence relation $\equiv$ on $R$ by considering the
transitive closure of:
\begin{itemize}
\item $(x,y_1)\equiv (x,y_2)$ for all $(x,y_1)$, $(x,y_2)\in R$.  
\item $(x_1,y)\equiv (x_2,y)$ for all $(x_1,y)$, $(x_2,y)\in R$.
\end{itemize}
Since $R$ is near injective, it follows that if $(x_1,y_1)\equiv (x_2,y_2)$ 
then $b_1(x_1)=b_1(x_2)$ and $b_2(y_1)=b_2(y_2)$, and thus
$a'_1(x_1,y_1) \equiv^\alpha a'_1(x_2,y_2)$ and 
$a'_2(x_1,y_1) \equiv^\alpha a'_2(x_2,y_2)$.

We consider $R/{\equiv}$ and for each equivalence class of the 
quotient set we choose a canonical representative $\overline{(x,y)}$. 
Obviously we have that $(x,y_1),(x,y_2)\in R$ implies
$\overline{(x,y_1)}=\overline{(x,y_2)}$ and that
$(x_1,y),(x_2,y)\in R$ implies $\overline{(x_1,y)}=\overline{(x_2,y)}$.

Let us now define two coalgebras $a'_1: X\lra FX$ and $a'_2 : Y \lra FY$ 
as follows:
\begin{itemize}
\item If there exists some $y$ such that $(x,y)\in R$ we take 
 $a'_1(x)=a'_1\overline{(x,y)}$ for any such $y$; otherwise, we
 define $a'_1(x)$ as $a_1(x)$. 
\item If there exists some $x$ such that $(x,y)\in R$ we take 
 $a'_2(y)=a'_2\overline{(x,y)}$ for any such $x$; otherwise,
 $a'_2(y)$ is $a_2(y)$.
\end{itemize}
With the above definitions, 
\[
a'_1(x) = a'_1\overline{(x,y)} \equiv^\alpha a'_1(x,y) \equiv^\alpha a_1(x) \,,
\]
and similarly $a'_2(y) \equiv^\alpha a_2(y)$, so that $a'_1$, $a'_2$ are
$F$-representations of $b_1$ and $b_2$.
Besides,
\[
\textrm{if $(x,y)\in R$ then $(a'_1(x),a'_2(y)) \in\rel{F}(R)$} 
\]
and $R$ is an $F$-bisimulation relating them. 
\qed
\end{proof}

Let us conclude this illustration of our main theorem, by explaining 
why we needed an infinite coalgebra to get a counterexample of the result
between
bisimulations relating $G$-coalgebras and those relating their
$F$-representations. As a matter of fact, in the case of the
multiset and the powerset functors we could prove the result in
Prop.~\ref{p-Ref} not only for near injective bisimulations but
for any relation where no element is related with infinitely many
others. However, we will not prove this fact here since it does not seem 
to generalize to arbitrary natural transformations relating two
functors.

Next we present more examples for the natural transformation
$\disc_{M}:\mult_1([0,1]\times X)\Rightarrow \disc(X)$. If we consider the
two probabilistic transition systems $s_X$ and $s_Y$ given by
their (multi)sets of probabilistic transitions:
$s_X=\{(\frac{1}{2},x,x'_1),(\frac{1}{2},x,x'_2)\}$,
$s_Y=\{(\frac{1}{3},y,y'_1),(\frac{1}{3},y,y'_2),(\frac{1}{3},y,y'_3)\}$,
where each $3$-tuple $(p,x,x')$ represents the probabilistic
transition $x\stackrel{p}{\rightarrow}x'$, we have the following
$\disc$-bisimulation relating the initial states $x$ and $y$:
$R=\{(x,y)\} \cup \{(x'_i,y'_j) \mid i\in 1,2, j\in 1,\ldots 3\}$. 
It is easy to see that considering the two
$\mult_1$-representations $s_X^3=\{3\cdot (\frac{1}{6},x,x'_1),
3\cdot (\frac{1}{6},x,x'_2)\}$ and $s_Y^2=\{2\cdot
(\frac{1}{6},y,y'_1), 2\cdot (\frac{1}{6},y,y'_2),2\cdot
(\frac{1}{6},y,y'_3)\}$, $R$ gives also an $\mult_1$-bisimulation
between them, using the facts that
$(x'_1,y'_1)\in R$, $(x'_2,y'_2)\in R$ and $(x'_1,y'_3)\in R$,
$(x'_2,y'_3)\in R$. From this result we immediately conclude that
any two $\mult_1$-representations of $s_X$ and $s_Y$ are
$\equiv^{\D_\M}$-similar.


\section{Natural transformations and simulations}

In this section we will see that all our results about
bisimulations in the previous sections can be extended to
categorical simulations defined by means of an order on the
corresponding functors. Therefore, our first result concerns the
preservation of functorial orders by means of natural
transformations.

\begin{definition}
Given a natural transformation $\alpha:F\Rightarrow G$ and
$\sqsubseteq_G$ an order on $G$, we define the induced order 
$\sqsubseteq_{G}^{\alpha-}$ on $F$ by
\[
x \<_G^{\alpha-} x' \iff \alpha_X(x)\<_G \alpha_X(x')
\]
\end{definition}


It is immediate that $\<_G^{\alpha-}$ is indeed an order on $F$;
given $f : X \lra Y$ and $x, x' \in X$:
\[
\begin{array}{rcl}
x\<_G^{\alpha-} x' &\iff &\alpha_X(x) \<_G \alpha_X(x') \\
&\Longrightarrow &Gf(\alpha_X(x)) \<_G Gf(\alpha_X(x')) \\
&\iff &\alpha_Y(Ff(x)) \<_G \alpha_Y(Ff(x')) \\
&\iff &Ff(x) \<_G^{\alpha-} Ff(x') \,,
\end{array}
\]
where the implication follows because $\<_G$ is functorial.

An interesting example corresponds to the equality relation as an
order on $G$.

\begin{proposition}
$=_G^{\alpha-}$ is $\equiv^\alpha_G$.
\end{proposition}
\begin{proof}
The definition of $\equiv^\alpha_G$ is just the
particular case of our definition of $\<_G^{\alpha-}$ for the
equality relation on $G$ as an order on it. 
\qed
\end{proof}

Orders on $F$ can be also translated to $G$ through a natural
transformation $\alpha: F\Rightarrow G$.

\begin{definition}
Given a natural transformation $\alpha: F\Rightarrow G$ and $\<_F$
an order on $F$, we define the projected order $\<_F^\alpha$ on $G$ as the
transitive closure of the relation $x\<_F^{\alpha} x'$, which holds if and 
only if
\[
\textrm{there exist $x_1$, $x'_1$ such that $x=\alpha_X(x_1)$, 
$x'=\alpha_X(x'_1)$ and $x_1\<_F x'_1$, or $x=x'$.}
\]
\end{definition}

We need to add the last condition in the definition above in order
to cover the case in which $\alpha$ is not an epi. Obviously, we can
remove it whenever $\alpha$ is indeed an epi, and in the following
we will see that we only need that condition in order to guarantee
reflexivity of $\<_F^\alpha$ in the whole of $GX$, because all of
our results concerning this order will be based on its restriction
to the images of the components of the natural transformation
$\alpha_X$.

Again, it is easy to prove that $\<_F^\alpha$ is indeed an order on $G$.
By definition, it is reflexive and transitive.
It is also functorial: given 
$f: X\lra Y$ and $x\<_F^\alpha x'$, with $x=\alpha_X(x_1)$ and
$x'=\alpha(x'_1)$ such that $x_1 \<_F x'_1$, 
we need to show $Gf(x) \<_F^\alpha Gf(x')$.
Since $Gf(x) = Gf(\alpha(x_1)) = \alpha(Ff(x_1))$, 
$Gf(x') = Gf(\alpha(x'_1)) = \alpha(Ff(x'_1))$, and 
$Ff(x_1) \<_F Ff(x'_1)$, the result follows by the definition of $\<_F^\alpha$.

\begin{theorem}[Simulations are preserved by natural transformations] 
If $R\subseteq X\times Y$ is a $\<_F$-simulation relating $a : X\lra FX$ and
$b : Y\lra FY$, and $\alpha:F\Rightarrow G$ is a natural transformation, then 
$R$ is also a $\<_F^\alpha$-simulation relating $a'=\alpha_X \circ a$ and
$b'=\alpha_Y \circ b$.
\end{theorem}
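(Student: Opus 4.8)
The plan is to unfold both the hypothesis and the goal using the explicit description of $\rel{F}_{\sqsubseteq}$ given earlier, and then to push a single witness through the component $\alpha_R$ of the natural transformation. Concretely, fix $(x,y)\in R$; for the projected order $\<_F^\alpha$ on $G$ I must produce some $z\in G(R)$ with $a'(x)\<_F^\alpha Gr_1(z)$ and $Gr_2(z)\<_F^\alpha b'(y)$, where $r_1,r_2$ are the projections of $R$. Since $R$ is a $\<_F$-simulation relating $a$ and $b$, the expanded form of $\rel{F}_{\<_F}(R)$ supplies a witness $w\in F(R)$ such that $a(x)\<_F Fr_1(w)$ and $Fr_2(w)\<_F b(y)$. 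This $w$ is the only raw material the argument needs.

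The key step is to take $z=\alpha_R(w)\in G(R)$ and compute its projections using naturality. Applying the naturality square of $\alpha$ to the maps $r_1:R\lra X$ and $r_2:R\lra Y$ gives $Gr_1(\alpha_R(w))=\alpha_X(Fr_1(w))$ and $Gr_2(\alpha_R(w))=\alpha_Y(Fr_2(w))$, that is, $Gr_1(z)=\alpha_X(Fr_1(w))$ and $Gr_2(z)=\alpha_Y(Fr_2(w))$. Now, since $a'(x)=\alpha_X(a(x))$ and $a(x)\<_F Fr_1(w)$ holds in $FX$, the defining base clause of $\<_F^\alpha$ (taking $x_1=a(x)$ and $x_1'=Fr_1(w)$) yields directly $\alpha_X(a(x))\<_F^\alpha\alpha_X(Fr_1(w))$, i.e. $a'(x)\<_F^\alpha Gr_1(z)$. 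Symmetrically, from $Fr_2(w)\<_F b(y)$ in $FY$ and $b'(y)=\alpha_Y(b(y))$ I obtain $Gr_2(z)=\alpha_Y(Fr_2(w))\<_F^\alpha\alpha_Y(b(y))=b'(y)$. These two facts exhibit $z$ as the required witness, so $(a'(x),b'(y))\in\rel{G}_{\<_F^\alpha}(R)$, and hence $R$ is a $\<_F^\alpha$-simulation relating $a'$ and $b'$.

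I expect no serious obstacle; the argument is essentially the order-sensitive refinement of Theorem~\ref{t-bis-pre}, with the whole weight resting on the naturality identity $Gr_i\circ\alpha_R=\alpha\circ Fr_i$ applied to the two projections. The only point that needs a moment's care is to observe that the base relation generating $\<_F^\alpha$ already suffices: the witnessing inequalities $a(x)\<_F Fr_1(w)$ and $Fr_2(w)\<_F b(y)$ are single-step $\<_F$-comparisons between elements of a common fibre, so their $\alpha$-images land in the one-step part of $\<_F^\alpha$ and I never need to invoke the transitive closure (nor the reflexive ``$x=x'$'' clause, which is present only to secure reflexivity when $\alpha$ is not epi). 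This keeps the verification clean and makes clear that the result holds for an arbitrary natural transformation, not merely an epi one.
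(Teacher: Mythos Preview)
Your proof is correct and follows essentially the same approach as the paper: fix $(x,y)\in R$, obtain a witness $w\in FR$ from the $\<_F$-simulation hypothesis, set $z=\alpha_R(w)$, and use naturality of $\alpha$ along $r_1,r_2$ together with the defining clause of $\<_F^\alpha$ to conclude. Your write-up is in fact slightly more careful than the paper's, since you make explicit both the naturality identity $Gr_i\circ\alpha_R=\alpha\circ Fr_i$ and the observation that only the one-step (non--transitive-closure) part of $\<_F^\alpha$ is needed.
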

\begin{proof}
Let $(x,y) \in R$: we need to show that $(a'(x), b'(y))\in 
\rel{G}_{\<_F^\alpha}(R)$.
Since $R$ is a $\<_F$-simulation, $(a(x), b(x)) \in \rel{F}_{\<_F}(R)$.
This means that there exists $w\in FR$ such that
$a(x) \<_F Fr_1(w)$ and $Fr_2(w) \<_F b(x)$, and hence that there exists 
$z = \alpha_R(w)\in GR$ such that $a'(x) \<_F^\alpha \alpha_X(Fr_1(w)) = Gr_1(z)$ 
and $Gr_2(z)= \alpha_Y(Fr_2(w)) \<_F^\alpha b'(x)$; therefore,
$(a'(x), b'(x)) \in \rel{G}_{\<_F^\alpha}(R)$. 
\qed
\end{proof}

As said before, bisimulations are just the particular case of
simulations  corresponding to the equality relation.
Obviously we have that $=_F^\alpha$ is $=_G$ and therefore 
Theorem~\ref{t-bis-pre} about the preservation of bisimulations by natural 
transformations is a particular case of our new preservation theorem
covering arbitrary $\<_F$-simulations.

Let us now also extend our result about the reflection of
bisimulations to arbitrary $\<_G$-simulations.

\begin{theorem}
Let $\alpha:F\Rightarrow G$ be an epi natural transformation, $\<_G$
an order on $G$ and $b_1 :X_1 \lra GX_1$, $b_2 :X_2\lra GX_2$ two
coalgebras, with $a_1 : X\lra FX$, $a_2 : Y\lra FY$
arbitrary concrete $F$-representations. 
Then, the $\<_G$-simulations relating $b_1$ and $b_2$ are precisely the 
$\<_G^{\alpha-}$-simulations relating $a_1$ and $a_2$.
\end{theorem}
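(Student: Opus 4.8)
The plan is to mimic closely the structure of the proof of Theorem~\ref{t-epi}, which handled the special case $\<_G{=}{=_G}$, and reduce the general statement to a single set-theoretic identity about relation liftings. The crux will be to prove that, for every relation $R\subseteq X_1\times X_2$,
\[
\rel{F}_{\<_G^{\alpha-}}(R) =
\{(u,v)\in FX_1\times FX_2\mid (\alpha_{X_1}(u),\alpha_{X_2}(v))\in\rel{G}_{\<_G}(R)\}\,.
\]
Once this identity is established, the theorem follows immediately: since $a_i=\alpha_{X_i}\circ b_i$ (the $F$-representation condition), we get $(b_1(x),b_2(y))\in\rel{G}_{\<_G}(R)$ if and only if $(a_1(x),a_2(y))\in\rel{F}_{\<_G^{\alpha-}}(R)$, and hence $R$ is a $\<_G$-simulation relating $b_1,b_2$ exactly when it is a $\<_G^{\alpha-}$-simulation relating $a_1,a_2$.

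To prove the identity I would unfold both sides using the expanded formula for $\rel{F}_\sqsubseteq(R)$ given in the Basic Definitions. The left-hand side is
\[
\{(u,v)\mid \exists w\in FR.\; u\<_G^{\alpha-}Fr_1(w)\;\wedge\;Fr_2(w)\<_G^{\alpha-}v\}\,,
\]
and by the very definition of the induced order, $u\<_G^{\alpha-}Fr_1(w)$ means $\alpha_{X_1}(u)\<_G\alpha_{X_1}(Fr_1(w))$, which by naturality equals $\alpha_{X_1}(u)\<_G Gr_1(\alpha_R(w))$; symmetrically for the second conjunct. So the left-hand side becomes
\[
\{(u,v)\mid \exists w\in FR.\; \alpha_{X_1}(u)\<_G Gr_1(\alpha_R(w))\;\wedge\;Gr_2(\alpha_R(w))\<_G\alpha_{X_2}(v)\}\,.
\]
The right-hand side, after unfolding $\rel{G}_{\<_G}(R)$ at the pair $(\alpha_{X_1}(u),\alpha_{X_2}(v))$, asks for some $z\in GR$ with $\alpha_{X_1}(u)\<_G Gr_1(z)$ and $Gr_2(z)\<_G\alpha_{X_2}(v)$.

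The two sides differ only in whether the witness in $GR$ is required to be of the special form $\alpha_R(w)$ for some $w\in FR$, or may be an arbitrary $z\in GR$. The inclusion of the left-hand side into the right-hand side is free: take $z=\alpha_R(w)$. For the reverse inclusion I would invoke that $\alpha$ is epi: given an arbitrary witness $z\in GR$, since $\alpha_R:FR\lra GR$ is surjective there exists $w\in FR$ with $\alpha_R(w)=z$, and substituting this $w$ back recovers membership in the left-hand side. This epi step is exactly the one used in Theorem~\ref{t-epi}, and it is the essential hypothesis; the main (and really only) obstacle is keeping the naturality rewrites and the two order conjuncts straight, since here both $u$ and $v$ carry nontrivial order inequalities rather than equalities, so one cannot collapse $\equiv^\alpha$ on both ends as in the bisimulation case. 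Apart from that bookkeeping the argument is a direct generalization, and indeed taking $\<_G{=}{=_G}$ (so that $\<_G^{\alpha-}{=}\equiv^\alpha$ by the preceding proposition, and $\rel{G}_{=_G}(R)=\rel{G}(R)$) recovers Theorem~\ref{t-epi} as a special case.
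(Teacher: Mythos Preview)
Your proposal is correct and follows essentially the same route as the paper's proof: establish the identity $\rel{F}_{\<_G^{\alpha-}}(R)=\{(u,v)\mid (\alpha_{X_1}(u),\alpha_{X_2}(v))\in\rel{G}_{\<_G}(R)\}$ by unfolding, using naturality to rewrite $\alpha_{X_i}(Fr_i(w))$ as $Gr_i(\alpha_R(w))$, then matching witnesses $z=\alpha_R(w)$ and invoking surjectivity of $\alpha_R$ for the converse. One small slip: the $F$-representation condition is $b_i=\alpha_{X_i}\circ a_i$, not $a_i=\alpha_{X_i}\circ b_i$ as you wrote; with the correct direction, plugging $u=a_1(x)$, $v=a_2(y)$ into the identity gives $(\alpha_{X_1}(a_1(x)),\alpha_{X_2}(a_2(y)))=(b_1(x),b_2(y))$ and the equivalence of the two simulation conditions follows exactly as you intended.
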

\begin{proof}
Analogously to Theorem~\ref{t-epi}, the result follows from showing that, 
for every relation $R\subseteq X_1\times X_2$,
\[
\rel{F}_{\<_G^{\alpha-}}(R) =
\{(u,v)\in FX_1\times FX_2\mid 
       (\alpha_{X_1}(u),\alpha_{X_2}(v))\in\rel{G}_{\<^{\alpha}_G}(R)\} \,.
\]
Unfolding the definition of $\rel{F}_{\<_G^{\alpha-}}(R)$ and using the
fact that $\alpha$ is a natural transformation:
\[
\begin{array}{rcl}
\rel{F}_{\<_G^{\alpha-}}(R) &=
&\{ (u,v) \in FX_1\times FX_2 \mid
   \exists w\in FR.\, u\<_G^{\alpha-} Fr_1(w) \land{} \\
&&\phantom{\{ (u,v) \in FX_1\times FX_2 \mid \exists w\in FR.\,} 
                      Fr_2(w)\<_G^{\alpha-} v \} \\
&=&\{ (u,v) \in FX_1\times FX_2 \mid
   \exists w\in FR.\, \alpha_{X_1}(u) \<_G \alpha_{X_1}(Fr_1(w)) \land{}\\
&&\phantom{\{ (u,v) \in FX_1\times FX_2 \mid \exists w\in FR.\,}
                      \alpha_{X_2}(Fr_2(w)) \<_G \alpha_{X_2}(v) \} \\
&=&\{ (u,v) \in FX_1\times FX_2 \mid
   \exists w\in FR.\, \alpha_{X_1}(u) \<_G Gr_1(\alpha_R(w)) \land{}\\
&&\phantom{\{ (u,v) \in FX_1\times FX_2 \mid \exists w\in FR.\,}
                      Gr_2(\alpha_R(w)) \<_G \alpha_{X_2}(v) \}\,.
\end{array}
\]
On the other hand,
\[
\rel{G}_{\<_G}(R) = \{ (x,y)\in GX_1\times GX_2\mid 
                \exists z\in GR.\, x \<_G Gr_1(z) \land Gr_2(z)\<_G y \} \,.
\]
Now, if $(u,v)\in \rel{F}_{\<_G^{\alpha-}}(R)$, by taking $\alpha_R(w)$ 
as the value of $z\in GR$ we have that 
$(\alpha_{X_1}(u), \alpha_{X_2}(v)) \in \rel{G}_{\<_G}(R)$.
Conversely, if $(\alpha_{X_1}(u), \alpha_{X_2}(v)) \in \rel{G}_{\<_G}(R)$ 
is witnessed by $z$, let $w\in FR$ be such that 
$\alpha_R(w) = z$, which must exist because
$\alpha$ is epi; it follows that $(u,v)\in \rel{F}_{\<_G^{\alpha-}}(R)$.
\qed
\end{proof}

Once again this result on $\<$-simulations generalizes that on
bisimulations in Theorem \ref{t-epi}, because if we take the
equality relation $=_G$ as an order on $G$ and $\alpha$ as an epi
natural transformation, we have that $=_G^{\alpha-}$ is $\equiv_\alpha$.

\section{Combining non-determinism and probabilistic choices}

Probabilistic choice appears as a quantitative counterpart of non-deterministic
choice. However, it has been also argued that the motivations supporting the
use of these two constructions are different, so that it is also interesting to
be able to manage both together. The literature on the subject is full of
proposals in this direction \cite{SL95,Mis00,TKP05}, but it has been
proved in \cite{VarWins06} that there is no distributive law of the
probabilistic monad $V$ over the powerset monad $P$. As a consequence, if we
want to combine the two categorical theories to obtain a common framework, we
have to sacrifice some of the properties of one of those monads. Varacca and
Winskel have followed this idea by relaxing the definition of the monad $V$,
removing the axiom $A\oplus_p A=A$, so that they are aware of the probabilistic
choices taken along a computation, even if they are superfluous.

We have not yet studied that general case, whose solution in \cite{VarWins06} is
technically fine, but could be considered intuitively not too satisfactory,
since one would like to maintain the idempotent law $A\oplus_p
A=A$, even if this means that only some practical cases can be considered.

As a first step in this direction we will present here the simple case of
alternating probabilistic systems, which in our multi-transition system
framework can be defined as follows:

\begin{definition}
Alternating multi-transition systems are defined as $(\M(A\times
\cdot)\cup\M_1([0,1]\times A\times \cdot))$-coalgebras: any state of a system
represents either a non-deterministic choice or a probabilistic choice; however,
probabilistic and non-deterministic choices cannot be mixed together.
\end{definition}

By combining the two natural transformations $\{\cdot\}$ and $\D_M$ we 
obtain the natural transformation $\D_{M}^a$, that captures the behaviour of
alternating transition systems.

\begin{definition}
We call alternating probabilistic systems to the
$\Partes(X)\cup\D(A\times X)$-coalgebras. By combining the classical definition
of bisimulation and that of probabilistic bisimulations we obtain the natural
definition of probabilistic bisimulation for alternating probabilistic
systems.
We define $\D_{M_X}^a:\M(A\times
X)\cup\M_1([0,1]\times A\times X)\Longrightarrow \Partes(X)\cup\D(A\times X)$
as $\D_{M_X}^a(M)=\{\cdot\}(M)$, $\D_{M_X}^a(M_1)=\D_M(M_1)$, where
$M\in\M(A\times X)$, $M_1\in\M_1([0,1]\times A\times X)$.
\end{definition}

Then we can consider the induced functorial equivalence $\equiv^{\D_M^a}$ which
roughly corresponds to the application of $\equiv^{\{\cdot\}}$ in the
non-deterministic states, and the application of $\equiv^{\D_M}$ in the
probabilistic states. As a consequence of Theorem \ref{t-epi} we obtain the
following corollary.

\begin{corollary}
Bisimulations between alternating probabilistic systems are just
$\equiv^{\D_M^a}$-simulations between alternating multi-transition
systems.
\end{corollary}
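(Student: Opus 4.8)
The plan is to read this off as a direct instance of Theorem~\ref{t-epi}, applied to the functor $F=\M(A\times\cdot)\cup\M_1([0,1]\times A\times\cdot)$ of alternating multi-transition systems, the functor $G=\ps\cup\D(A\times\cdot)$ of alternating probabilistic systems, and the natural transformation $\alpha=\D_M^a$. The only genuine work is to check the hypotheses of that theorem — that $\D_M^a$ is an \emph{epi} natural transformation — and then to match the \emph{combined} notion of bisimulation for alternating probabilistic systems with categorical $G$-bisimilarity, so that the theorem applies verbatim.

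First I would verify that $\D_M^a$ is a natural transformation. Since it is defined by cases on the two summands of the coproduct — acting as $\{\cdot\}$ on the $\M(A\times\cdot)$ part and as $\D_M$ on the $\M_1([0,1]\times A\times\cdot)$ part — and both $\{\cdot\}$ and $\D_M$ are already known to be natural transformations, the naturality square for $\D_M^a$ commutes on each summand separately and hence on the whole coproduct. For the epi property, each component $\{\cdot\}_X$ is surjective (every finite set is the support of its canonical multiset, which is exactly the content of $\ps$ being the quotient of $\mult$), and each component of $\D_M$ is surjective (the content of $\disc$ being the quotient of $\mult_1([0,1]\times\cdot)$); in $\Sets$ a coproduct of surjections is again surjective, so $\D_M^a$ is epi. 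By Proposition~\ref{p-Frep} this already guarantees that every alternating probabilistic system admits an alternating multi-transition system as a concrete $F$-representation, so the statement is not vacuous.

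Next I would identify categorical $G$-bisimulation with the combined bisimulation of the definition. The key observation is that relation lifting distributes over coproducts of functors: for $G=\ps\cup\D(A\times\cdot)$, a pair $(u,v)$ lies in $\rel{G}(R)$ precisely when $u$ and $v$ belong to the same summand and are related by the corresponding component lifting, $\rel{\ps}(R)$ or $\rel{\D(A\times\cdot)}(R)$. Because in an alternating system non-deterministic and probabilistic choices never appear mixed after a state, a categorical $G$-bisimulation is forced to match non-deterministic states with non-deterministic states via $\rel{\ps}$ and probabilistic states with probabilistic states via $\rel{\D(A\times\cdot)}$; invoking that ordinary bisimulation coincides with $\ps$-bisimulation and, by Vink and Rutten, that Larsen and Skou's probabilistic bisimulation coincides with $\D$-bisimilarity, this is exactly the combined bisimulation. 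With both hypotheses in place, Theorem~\ref{t-epi} immediately yields that the $G$-bisimulations relating two alternating probabilistic systems are precisely the $\equiv^{\D_M^a}$-simulations relating their $F$-representations.

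The step I expect to be the main obstacle is this last identification: verifying that the coproduct structure splits cleanly, so that relation lifting, the induced equivalence $\equiv^{\D_M^a}$, and the resulting $\equiv^{\D_M^a}$-simulation all decompose along the two summands, with the alternation hypothesis ensuring that there is genuinely no interaction between the non-deterministic and probabilistic parts. Once this no-mixing decomposition is in hand, the remainder is a routine application of Theorem~\ref{t-epi}.
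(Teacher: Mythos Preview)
Your proposal is correct and follows the same route as the paper: the paper derives the corollary in one line, ``as a consequence of Theorem~\ref{t-epi}'', after noting that $\equiv^{\D_M^a}$ decomposes into $\equiv^{\{\cdot\}}$ on non-deterministic states and $\equiv^{\D_M}$ on probabilistic states. You simply spell out the verifications the paper leaves implicit --- naturality and surjectivity of $\D_M^a$ componentwise on the coproduct, and the identification of the combined bisimulation with categorical $G$-bisimulation --- which is exactly what is needed to invoke Theorem~\ref{t-epi}.
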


\section{Conclusion}

In this paper we have shown that multitransition systems are a
common framework wherein bisimulation of ordinary and probabilistic transition 
systems almost collapse into the same concept of multiset (bi)simulation. 
Indeed, the definition of bisimulation for the multiset functor is extremely 
simple, which supports the idea that multisets are the natural framework 
in which to justify the use of bisimulation as the canonical notion 
of equivalence between (states of) systems.

These results have been obtained by exploiting the fact that natural
transformations between two functors relate in a nice way bisimulations over
their corresponding coalgebras. We have illustrated these general results by
means of the natural transformations that connect the powerset and the
probabilistic distributions functors with the multiset functor. 

The categorical notion of simulation proposed by Hughes and Jacobs has played 
a very important role in our work; this fact, in our opinion, is far from 
being casual. 
In particular, categorical simulations based on equivalence relations 
always define equivalence relations weaker than bisimulation equivalence.
Besides, as illustrated by their use in this paper, they can be used to 
relate the bisimulation equivalence corresponding to functors connected 
by a natural transformation.


Related to our work is \cite{BartelsEtAl04}, where probabilistic bisimulations 
are studied in connection with natural transformations and other categorical 
notions.
Even though some connections can be found, there are very important differences;
in particular they do not consider categorical simulations nor use the 
multiset functor as a general framework in which to study both ordinary and 
probabilistic bisimulations. 
We can also mention \cite{Varacca02}, where the functor $\mathcal{D}$ is
replaced with a functor of indexed valuations so that it can be combined
with the powerset functor. 



A direction for further study that we intend to explore concerns other 
classes of bisimulations, like the forward-backward ones estudied in 
\cite{Hasuo06}. Besides we will study more general combinations of
non-deterministic and probabilistic choices, comparing in detail our
approach with the use of indexed valuations in \cite{Varacca02,VarWins06} to
combine the monads defining the corresponding functors. 

We are confident we will
be able to study them in a common setting by 
generalizing and adapting all the appropriate notions on categorical 
simulations.



\end{document}